\documentclass[10pt, pra,notitlepage,twocolumn,superscriptaddress,longbibliography]{revtex4-2}

\usepackage{amsfonts,amsmath,amssymb,mathtools,mathrsfs,bbm,float}
\usepackage[ruled, vlined, linesnumbered]{algorithm2e}
\usepackage{graphicx,epic,eepic,epsfig,latexsym,verbatim,color}
\usepackage[inline,shortlabels]{enumitem}
\usepackage[caption=false]{subfig}
\usepackage[dvipsnames]{xcolor}
 
\usepackage{tikz}
\usetikzlibrary{chains}
\usetikzlibrary{fit}

\usepackage{epsfig}
\usetikzlibrary{shapes.symbols,patterns} 
\usepackage{pgfplots}

\usepackage[strict]{changepage}

\usepackage[marginal]{footmisc}
\usepackage{url}
\usepackage{theorem}
\usepackage{thmtools}
\usepackage[ISO]{diffcoeff}
\usepackage{nicematrix}
\usepackage{xfrac}

\newtheorem{proposition}{Proposition}

\newtheorem{theorem}[proposition]{Theorem}
\newtheorem{corollary}[proposition]{Corollary}

\newenvironment{proof}{\noindent \textbf{{Proof~} }}{\hfill $\blacksquare$}

\def\squareforqed{\hbox{\rlap{$\sqcap$}$\sqcup$}}
\def\qed{\ifmmode\squareforqed\else{\unskip\nobreak\hfil
\penalty50\hskip1em\null\nobreak\hfil\squareforqed
\parfillskip=0pt\finalhyphendemerits=0\endgraf}\fi}
\def\endenv{\ifmmode\;\else{\unskip\nobreak\hfil
\penalty50\hskip1em\null\nobreak\hfil\;
\parfillskip=0pt\finalhyphendemerits=0\endgraf}\fi}

\newcounter{example}

\mathchardef\ordinarycolon\mathcode`\:
\mathcode`\:=\string"8000
\def\vcentcolon{\mathrel{\mathop\ordinarycolon}}
\begingroup \catcode`\:=\active
  \lowercase{\endgroup
  \let :\vcentcolon
  }

\usepackage{hyperref}
\hypersetup{colorlinks=true,citecolor=blue,linkcolor=blue,filecolor=blue,urlcolor=blue,breaklinks=true}
\usepackage{footnotebackref}
\usepackage{cleveref}
\usepackage{graphicx}

\definecolor{darkblue}{RGB}{0,76,156}
\definecolor{darkkblue}{RGB}{0,0,153}
\definecolor{blue2}{RGB}{102,178,255}
\definecolor{darkred}{RGB}{195,0,0}

\RequirePackage[framemethod=default]{mdframed}
\newmdenv[skipabove=7pt,
skipbelow=7pt,
backgroundcolor=darkblue!15,
innerleftmargin=5pt,
innerrightmargin=5pt,
innertopmargin=5pt,
leftmargin=0cm,
rightmargin=0cm,
innerbottommargin=5pt,
linewidth=1pt]{tBox}

\newmdenv[skipabove=7pt,
skipbelow=7pt,
backgroundcolor=blue2!25,
innerleftmargin=5pt,
innerrightmargin=5pt,
innertopmargin=5pt,
leftmargin=0cm,
rightmargin=0cm,
innerbottommargin=5pt,
linewidth=1pt]{dBox}

\newmdenv[skipabove=7pt,
skipbelow=7pt,
backgroundcolor=darkred!15,
innerleftmargin=5pt,
innerrightmargin=5pt,
innertopmargin=5pt,
leftmargin=0cm,
rightmargin=0cm,
innerbottommargin=5pt,
linewidth=1pt]{rBox}


\newcommand{\nc}{\newcommand}
\nc{\ketbra}[2]{\lvert#1\rangle\!\langle#2\rvert}
\nc{\braket}[2]{\langle#1|#2\rangle}

\DeclarePairedDelimiter{\norm}{\lVert}{\rVert}
\DeclarePairedDelimiter{\abs}{\lvert}{\rvert}

\makeatletter
\let\oldabs\abs
\def\abs{\@ifstar{\oldabs}{\oldabs*}}
\let\oldnorm\norm
\def\norm{\@ifstar{\oldnorm}{\oldnorm*}}
\makeatother

\DeclarePairedDelimiterX{\infdivx}[2]{(}{)}{%
  #1\;\delimsize\|\;#2%
}

\nc{\proj}[1]{| #1\rangle\!\langle #1 |}
\nc{\avg}[1]{\langle#1\rangle}
\nc{\smfrac}[2]{\mbox{$\frac{#1}{#2}$}}
\nc{\tr}{\operatorname{Tr}}
\nc{\ox}{\otimes}
\nc{\dg}{\dagger}
\nc{\dn}{\downarrow}
\nc{\cA}{{\cal A}}
\nc{\cB}{{\cal B}}
\nc{\cC}{{\cal C}}
\nc{\cD}{{\cal D}}
\nc{\cE}{{\cal E}}
\nc{\cF}{{\cal F}}
\nc{\cG}{{\cal G}}
\nc{\cH}{{\cal H}}
\nc{\cI}{{\cal I}}
\nc{\cJ}{{\cal J}}
\nc{\cK}{{\cal K}}
\nc{\cL}{{\cal L}}
\nc{\cM}{{\cal M}}
\nc{\cN}{{\cal N}}
\nc{\cO}{{\cal O}}
\nc{\cP}{{\cal P}}
\nc{\cQ}{{\cal Q}}
\nc{\cR}{{\cal R}}
\nc{\cS}{{\cal S}}
\nc{\cT}{{\cal T}}
\nc{\cU}{{\cal U}}
\nc{\cV}{{\cal V}}
\nc{\cX}{{\cal X}}
\nc{\cY}{{\cal Y}}
\nc{\cZ}{{\cal Z}}
\nc{\cW}{{\cal W}}
\nc{\csupp}{{\operatorname{csupp}}}
\nc{\qsupp}{{\operatorname{qsupp}}}
\nc{\var}{{\operatorname{var}}}
\nc{\rar}{\rightarrow}
\nc{\lrar}{\longrightarrow}
\nc{\polylog}{{\operatorname{polylog}}}
\nc{\wt}{{\operatorname{wt}}}
\nc{\supp}{{\operatorname{supp}}}

\nc{\argmin}{{\operatorname{argmin}}}

\makeatletter
\newcommand{\tpmod}[1]{{\@displayfalse\pmod{#1}}}
\makeatother

\def\x{\xi}

\def\O{\Omega}

\nc{\RR}{{{\mathbb R}}}
\nc{\CC}{{{\mathbb C}}}
\nc{\FF}{{{\mathbb F}}}
\nc{\NN}{{{\mathbb N}}}
\nc{\ZZ}{{{\mathbb Z}}}
\nc{\PP}{{{\mathbb P}}}
\nc{\QQ}{{{\mathbb Q}}}
\nc{\UU}{{{\mathbb U}}}
\nc{\EE}{{{\mathbb E}}}
\nc{\id}{{\operatorname{id}}}

\nc{\CHSH}{{\operatorname{CHSH}}}

\nc{\rU}{\mbox{U}}

\nc{\ob}[1]{#1}

\nc{\SEP}{{\text{\rm SEP}}}
\nc{\NS}{{\text{\rm NS}}}
\nc{\LOCC}{{\text{\rm LOCC}}}
\nc{\PPT}{{\text{\rm PPT}}}
\nc{\EXT}{{\text{\rm EXT}}}
\nc{\Sym}{{\operatorname{Sym}}}


\nc{\ERLO}{{E_{\text{r,LO}}}}
\nc{\ERLOCC}{{E_{\text{r,LOCC}}}}
\nc{\ERPPT}{{E_{\text{r,PPT}}}}
\nc{\ERLOCCinfty}{{E^{\infty}_{\text{r,LOCC}}}}
\nc{\Aram}{{\operatorname{\sf A}}}

\newcommand{\Choi}{Choi-Jamio\l{}kowski }


\makeatletter
\def\grd@save@target#1{%
  \def\grd@target{#1}}
\def\grd@save@start#1{%
  \def\grd@start{#1}}
\tikzset{
  grid with coordinates/.style={
    to path={%
      \pgfextra{%
        \edef\grd@@target{(\tikztotarget)}%
        \tikz@scan@one@point\grd@save@target\grd@@target\relax
        \edef\grd@@start{(\tikztostart)}%
        \tikz@scan@one@point\grd@save@start\grd@@start\relax
        \draw[minor help lines,magenta] (\tikztostart) grid (\tikztotarget);
        \draw[major help lines] (\tikztostart) grid (\tikztotarget);
        \grd@start
        \pgfmathsetmacro{\grd@xa}{\the\pgf@x/1cm}
        \pgfmathsetmacro{\grd@ya}{\the\pgf@y/1cm}
        \grd@target
        \pgfmathsetmacro{\grd@xb}{\the\pgf@x/1cm}
        \pgfmathsetmacro{\grd@yb}{\the\pgf@y/1cm}
        \pgfmathsetmacro{\grd@xc}{\grd@xa + \pgfkeysvalueof{/tikz/grid with coordinates/major step}}
        \pgfmathsetmacro{\grd@yc}{\grd@ya + \pgfkeysvalueof{/tikz/grid with coordinates/major step}}
        \foreach \x in {\grd@xa,\grd@xc,...,\grd@xb}
        \node[anchor=north] at (\x,\grd@ya) {\pgfmathprintnumber{\x}};
        \foreach \y in {\grd@ya,\grd@yc,...,\grd@yb}
        \node[anchor=east] at (\grd@xa,\y) {\pgfmathprintnumber{\y}};
      }
    }
  },
  minor help lines/.style={
    help lines,
    step=\pgfkeysvalueof{/tikz/grid with coordinates/minor step}
  },
  major help lines/.style={
    help lines,
    line width=\pgfkeysvalueof{/tikz/grid with coordinates/major line width},
    step=\pgfkeysvalueof{/tikz/grid with coordinates/major step}
  },
  grid with coordinates/.cd,
  minor step/.initial=.2,
  major step/.initial=1,
  major line width/.initial=2pt,
}
\makeatother

\usepackage{thmtools}
\usepackage{thm-restate}
\usepackage{etoolbox}
\makeatletter
\def\problem@s{}
\newcounter{problems@cnt}

\newcommand{\allproblems}{\problem@s}
\makeatother

\usepackage{times}
\usepackage{amsmath,amsfonts,amssymb,graphicx,mathtools,bm,tcolorbox,relsize}
\usepackage[utf8]{inputenc}
\usepackage[T1]{fontenc}
\usepackage[qm]{qcircuit}
\pgfplotsset{compat=1.18}
\usepackage{braket}
\usepackage{natbib}
\usepackage{multirow}
\usepackage{booktabs}
\usepackage{array}

\usepackage{tabularx}
\usepackage{soul}
\usepackage[export]{adjustbox}

\definecolor{colortwo}{rgb}{0.4,0.77,0.17}
\definecolor{colorthree}{rgb}{0.01,0.51,0.93}
\definecolor{darkgray}{rgb}{0.3,0.3,0.3}


\newcommand{\update}[1]{#1}

\newcommand{\highlight}[1]{\textcolor{colorthree}{#1}}
\newcommand{\lowlight}[1]{\textcolor{darkgray}{#1}}


\nc{\Uin}{ U_\textrm{in} }
\nc{\Nin}{ \cN_\textrm{in} }
\nc{\Nout}{ \cN_\textrm{out} }
\nc{\su}{ \operatorname{SU} }
\newcommand{\aux}{ \text{aux} }
\newcommand{\bmP}{ \mathbf{P} }
\newcommand{\bmF}{ \mathbf{F} }
\newcommand{\bmV}{ \mathbf{V} }

\newcommand{\uni}[1]{ \textrm{Universal}_{#1} }
\newcommand{\fourcir}{ \mathfrak{C}_\mathrm{IV} }
\newcommand{\fivecir}{ \mathfrak{C}_\mathrm{V} }

\nc{\cmark}{\ding{51}}
\nc{\xmark}{\ding{55}}

\newcommand{\trace}[2][]{ \tr_{#1}\left[ #2 \right] }
\newcommand{\integral}[2][]{ \int_{#1} \text{d}{#2}\, }

\renewcommand{\set}[1]{ \left\{ #1 \right\} }

\nc{\kett}[1]{\lvert #1 \rangle\!\rangle}
\nc{\ketbravec}[2]{\lvert #1 \rangle\!\rangle\!\langle\!\langle #2 \rvert}

\SetKwIF{If}{ElseIf}{Else}{if}{}{else if}{else}{end if}%
\SetKwFor{While}{while}{}{end while}%
\SetKwRepeat{Do}{do}{while}
\SetKw{KwGoTo}{go to}

\allowdisplaybreaks

\setcounter{tocdepth}{0}

\begin{document}
\title{Parameterized quantum comb and simpler circuits for reversing unknown qubit-unitary operations}
\author{Yin Mo}
\thanks{Yin Mo, Lei Zhang, and Yu-Ao Chen contributed equally to this work.}
\affiliation{Thrust of Artificial Intelligence, Information Hub, The Hong Kong University of Science and Technology (Guangzhou), Guangdong, China}
\author{Lei Zhang}
\thanks{Yin Mo, Lei Zhang, and Yu-Ao Chen contributed equally to this work.}
\affiliation{Thrust of Artificial Intelligence, Information Hub, The Hong Kong University of Science and Technology (Guangzhou), Guangdong, China}
\author{Yu-Ao Chen}
\thanks{Yin Mo, Lei Zhang, and Yu-Ao Chen contributed equally to this work.}
\affiliation{Thrust of Artificial Intelligence, Information Hub, The Hong Kong University of Science and Technology (Guangzhou), Guangdong, China}
\author{Yingjian Liu}
\affiliation{Thrust of Artificial Intelligence, Information Hub, The Hong Kong University of Science and Technology (Guangzhou), Guangdong, China}
\author{Tengxiang Lin}
\affiliation{Thrust of Artificial Intelligence, Information Hub, The Hong Kong University of Science and Technology (Guangzhou), Guangdong, China}
\author{Xin Wang}
\email{felixxinwang@hkust-gz.edu.cn}
\affiliation{Thrust of Artificial Intelligence, Information Hub, The Hong Kong University of Science and Technology (Guangzhou), Guangdong, China}

\begin{abstract}
Quantum combs play a vital role in characterizing and transforming quantum processes, with wide-ranging applications in quantum information processing. However, obtaining the explicit quantum circuit for the desired quantum comb remains a challenging problem. We propose PQComb, a novel framework that employs parameterized quantum circuits (PQCs) or quantum neural networks to harness the full potential of quantum combs for diverse quantum process transformation tasks. This method is well-suited for near-term quantum devices and can be applied to various tasks in quantum machine learning. 
As a notable application, we present two streamlined protocols for the time-reversal simulation of unknown qubit unitary evolutions, reducing the ancilla qubit overhead from six to three compared to the previous best-known method. 
We also extend PQComb to solve the problems of qutrit unitary transformation and channel discrimination.
Furthermore, we demonstrate the hardware efficiency and robustness of our qubit unitary inversion protocol
under realistic noise simulations of IBM-Q superconducting quantum hardware, yielding a significant improvement in average similarity over the previous protocol under practical regimes. PQComb's versatility and potential for broader applications in quantum machine learning pave the way for more efficient and practical solutions to complex quantum tasks.
\end{abstract}

\date{\today}

\maketitle

\section{Introduction}
\begin{figure*}[t]
\centering
\includegraphics[width=0.8\textwidth]{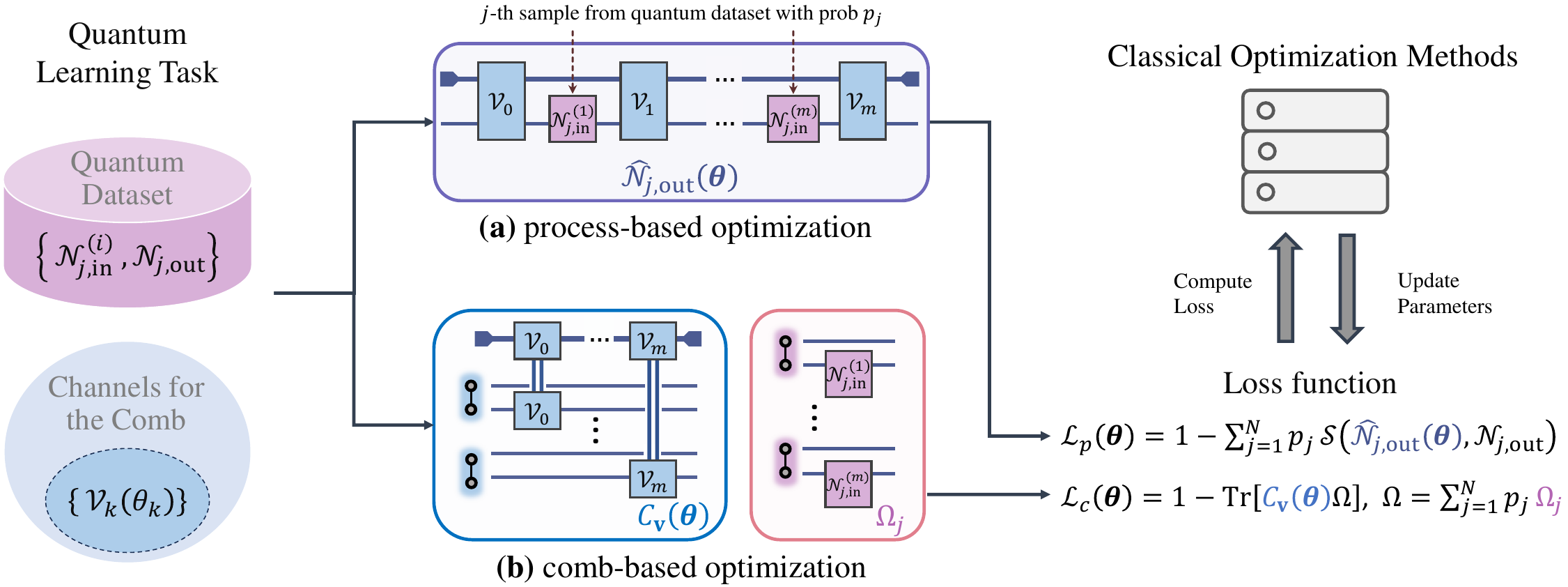}
\caption{ The overview of training formalism for the parameterized quantum comb framework. Within this scheme, the channel for the $k$-th tooth $\cV_k(\theta_k)$ is now parameterized by $\theta_k$ that remains tunable to adjustments during the optimization phase, and $\bm \theta = (\theta_0, \ldots, \theta_m)$ is denoted as the vector of all parameters in this PQComb. 
\textbf{(a)} describes how the PQComb trains the protocol using the process-based loss function $\cL_p$, which is computed by the average dissimilarity between the sampled output process $\widehat{\cN}_{j, \textrm{out}}(\bm \theta)$ and the expected process $\cN_{j, \textrm{out}}$. \textbf{(b)} describes how the PQComb trains the protocol using the comb-based loss function $\cL_c$, which optimizes the Choi operator of the circuit $C_\bmV(\bm \theta)$ using the performance operator $\O$. Here each pair of two dots connected by a line represents the unnormalized maximally entangled state. }
\label{fig:pqcomb}
\end{figure*}

In quantum computing, we are capable not only of transforming states but also of transforming processes. Designing quantum circuits to transform input operations has a wide range of applications in quantum computing, quantum information processing, and quantum machine learning. The networks that perform such transformations are known as super-channels, which take processes as inputs and output the corresponding transformed process. 

In general, such super-channel can be realized with a quantum circuit architecture~\cite{chiribella2008transforming, chiribella2008quantum}, namely a \emph{quantum comb}. One typical example is a quantum sequential comb as shown in Figure~\ref{fig:pqcomb} block (a), which takes quantum operations as sequential inputs and returns a new operation close to the target transformation. 
Quantum comb is widely applied in solving process transformation problems, including transformations of unitary operations such as inversion~\cite{quintino2019reversing, Yoshida2023}, complex conjugation~\cite{Miyazaki2019}, control-$U$ analysis~\cite{chiribella2016optimal}, as well as learning tasks~\cite{bisio2010optimal, sedlak2019optimal}. It can also be used for analyzing more general processes~\cite{zhu2024reversing} and has also inspired structures like the indefinite causal network~\cite{chiribella2013quantum, oreshkov2012quantum}. 

Previously, the approach to determine the quantum comb for target transformation is based on SDP using \Choi isomorphism, which takes the Choi operator of the quantum comb as the variable to obtain a feasible comb. Due to its guaranteed convergence, this method has been widely adopted and yields optimal quantum performance for specific tasks.
However, one major problem of the SDP method is that the dimension of the Choi operator grows exponentially with the number of comb slots, making it impossible to conduct numerical experiments for large-scale problems.
Additionally, as SDP ultimately returns the Choi operator of the quantum comb, deriving a physical implementation of the circuit, such as converting it into a standard circuit model, is far from straightforward.

Drawing from the transformative impact of deep learning in areas such as the game of Go~\cite{Silver2016} and protein folding prediction~\cite{Jumper2020}, 
we seek to leverage machine learning paradigms to enhance the
exploration of quantum information technologies. 
In particular, machine learning has been instrumental in refining quantum processor designs~\cite{Mavadia2017,Wan2017,Lu2017b,Niu2019} and manipulating quantum entanglement~\cite{Wallnofer2020,zhao2021practical}. Also, previous work~\cite{zhu2023quantum} examined the integration of quantum comb into a quantum auto-encoder within the context of classical cloud computing. 
In this work, we employ machine learning strategies to tackle the complexities associated with higher-order quantum information transformations. By utilizing Parameterized Quantum Circuits (PQCs), we aim to pioneer new frontiers in the field.

Parameterized quantum circuits, which form a building block of quantum machine learning models, offer a modular approach by decomposing a quantum circuit into quantum gates characterized by tunable parameters~\cite{Benedetti2019a}. This allows for an iterative optimization process, often employing gradient descent algorithms akin to those found in classical machine learning. This method is more suitable for near-term quantum devices and has been applied in various variational quantum algorithms and quantum machine learning~\cite{peruzzo2014variational,mcclean2016theory,cerezo2021variational}. Due to its structure and its optimization method being similar to classical neural networks, it is also referred to as Quantum Neural Networks. 

Based on this idea, we introduce a comprehensive framework named ``PQComb'', which utilizes PQCs to establish a general quantum comb structure. This framework is applied to the task of transforming quantum processes, where we model the transformation as a quantum comb and employ PQCs to represent the channels of each tooth within the comb. We approach the task as an optimization problem, leveraging classical optimization strategies to optimize the performance of the quantum circuit for the task specified. The optimization is done by adjusting the parameters within this network. Through this framework, we extend PQC into a broader and adaptive quantum neural network with memory to deal with higher-order transformation tasks and, in particular, to develop new protocols for unitary transformations.

To demonstrate the practical value of our methodology, we present experimental results on several problems. Notably, PQComb is applied to find an exact and deterministic protocol that performs arbitrary qubit-unitary inversion by querying four times the unitary itself, proven to be optimal in terms of gate usage~\cite{Yoshida2023}. It also advances the statement of art by reducing the number of auxiliary qubits required from six~\cite{Yoshida2023} to merely three. 
Furthermore, we conducted noise simulations that showcase the hardware efficiency of our circuit, which highlights the robustness of our approach in practical, noisy environments. 
It is worth noting that through the analysis of the circuit in detail, a protocol for achieving arbitrary dimension unitary inversion is developed~\cite{chen2024quantum}, which is the first deterministic and exact approach to reverse general unknown quantum time evolutions, resolving a long-standing fundamental problem.
For qutrit unitary transformation, we derive two near-exact and deterministic protocols that achieve qutrit-unitary inverse and transpose, by querying ten and seven times of the given gate respectively. It shows the ability of PQComb to address problems beyond the capabilities of the SDP method. 
Besides, we conducted experiments on channel discrimination, further showcasing the broad applicability of PQComb beyond process transformation.


\section{Results}

\subsection{The PQComb framework}~\label{sec:pqcomb framework}

Quantum comb can be classified into parallel and sequential types~\cite{chiribella2008quantum}, with the former structure being a special case of later one. As illustrated in Figure~\ref{fig:pqcomb} block (a), a sequential circuit involves a sequence of data processing operators $\cV_0, \ldots, \cV_m$ where each pair $\cV_j$ and $\cV_{j+1}$ shares a memory system. This arrangement adaptively transforms input processes $\Nin^{(1)}, \ldots, \Nin^{(m)}$ into an output process $\Nout$.
The quantum comb is noteworthy for its capacity to encapsulate the structure advanced by the quantum signal processing technique~\cite{low2016methodology, gilyen2019quantum}, an algorithmic framework that has been instrumental in unifying most well-known quantum algorithms~\cite{Martyn2021}. Furthermore, this architectural paradigm is also applicable to the data re-uploading model in quantum machine learning~\cite{perez-salinas2020data}, demonstrating that the Fourier features of a single-qubit quantum unitary can be learned by a data re-uploading QNN model~\cite{yu2022power}.

Studying quantum combs helps develop quantum protocols that simulate desired transformations. Mathematically, the goal of the process transformation is to design a quantum comb that outputs a target process with a sequence of input channels, which simulate the transformation
\begin{equation}
    f \left( \Nin^{(1)}, \ldots, \Nin^{(m)} \right) = \Nout
.\end{equation}
By taking the whole comb's Choi operator $C_\bmV$ as the variable, this problem is traditionally solved based on the SDP approach.  
The optimal comb is derived by maximizing the performance function $\trace{C_\bmV \O}$ under the comb's constraints, where $\O$ is the performance operator determined by the given input channels and the target output process~\cite{chiribella2016optimal}.
Although the SDP approach has a guaranteed convergence and allows for the determination of the Choi operator of a feasible quantum comb, the storage complexity of fully describing the Choi operator of a quantum comb with $m$ slots of dimension $d$ is at least $\cO(d^{4m})$.
This exponential growth makes the numerical processing of large-scale problems infeasible. 
Additionally, the practical compilation of such a Choi operator on actual quantum hardware is hindered by the prohibitive cost associated with non-restricted quantum resources — the infeasibility of constraining the ranks of channels within the convex optimization framework. 

By contrast, inspired by machine learning models, we introduce a PQC framework called the parameterized quantum comb (PQComb) to solve the process transformation problem, which addresses the two aforementioned challenges.
Specifically, we replace each data processing operator $\cV_k(\bm{\theta}_{k})$ by PQC, so that the whole comb is now characterized by all adjustable parameters, and the set of which is denoted as $\bm{\theta}$. Once a loss function is formulated, the parameter set is iteratively updated through classical optimization methods to obtain the circuit that yields the near-optimal protocol for the given task.

In general, we could choose the loss function to be the dissimilarity between the real output $\widehat{\cN}_{\textrm{out}}({\bm \theta})$ and the expected output $\Nout$, denoted as $1 - \cS(\widehat{\cN}_\textrm{out}({\bm \theta}), \Nout)$ for some computable similarity function $\cS$ between processes. 
Once the input channels are fixed, the output process can be obtained by matrix computation directly. 
We could note that, in contrast to the optimized values in the SDP approach which need to be linear functions, the PQC approach allows us to use nonlinear similarity functions. 
Optimizing this loss function will provide us with a practical solution to achieve the desired transformation. 
For the general scenario where the input processes are not fixed but sampled from operation sets, the loss function becomes the average of the dissimilarity, namely the \emph{process-based loss function}
\begin{equation}~\label{eqn:process loss}
    \cL_p({\bm \theta}) =  1 - \sum_{j = 1}^N p_j \cS\left(\widehat{\cN}_{j, \textrm{out}}({\bm \theta}), {\cN}_{j, \textrm{out}}\right)
,\end{equation}
where $\widehat{\cN}_{j, \textrm{out}}({\bm \theta})$ is the real output process for the $j$-th input combination with sample probability $p_j$, and ${\cN}_{j, \textrm{out}}$ is the expected output for this sample result. 
As an example, in unitary transformation tasks, the input channel in each slot is usually an unknown unitary gate selected randomly in Haar measure. 

As a parameter optimization method, we further propose two techniques that can accelerate training in specific scenarios by leveraging the unique properties of comb. The first technique pertains to the computation of the loss function. 
The computation of $\cL_p$ may need to perform sampling and matrix computation to cover all possible selected input channels in each iteration, which encounters diminished training efficacy when the set volume increases. In this case, if the similarity function can be expressed as a linear equation in terms of the PQComb's Choi operator $C_\bmV({\bm \theta})$ as 
\begin{equation}~\label{eqn:similarity}
    \cS\left(\widehat{\cN}_{j, \textrm{out}}({\bm \theta}), {\cN}_{j, \textrm{out}}\right) = \trace{C_\bmV({\bm \theta}) \O_j} \, ,
\end{equation}
where $\O_j$ is the performance operator determined by $\cN_{j, \textrm{in}}^{(1)}$, $\ldots$, $\cN_{j, \textrm{in}}^{(m)}$ and $\cN_{j, \textrm{out}}$~\cite{quintino2022deterministic}, we propose an alternative loss function to overcome the sampling problem, namely the \emph{comb-based loss function}
\begin{equation}~\label{eqn:comb loss}
    \cL_c({\bm \theta}) = 1 - \trace{C_\bmV({\bm \theta})\O}
,\end{equation}
where $\O = \sum_{j = 1}^N p_j \O_j$.
This loss function incorporates the features of both PQC and quantum comb.
The Choi operator $C_\bmV({\bm \theta})$ can be calculated by inserting unnormalized maximally entangled states to all input systems of the parameterized comb. Since the performance operator $\O$ is determined by the input channels and the expected output, it allows for pre-computation, thus avoiding the need for sampling at every iteration.

The second technique is about an initialization scheme for the parameters $\bm{\theta}$, called the \emph{SWAP-based optimization method}, which is particularly effective when dealing with large slot numbers. It is readily apparent that as the number of slots increases, initializing the parameters ${\bm \theta}_{\textrm{ini}}$ randomly can result in a poor initial value of the loss function. This not only prolongs the overall training process but also increases the likelihood of encountering local minima and other optimization issues. 
To address this problem, for a given slot, if we sequentially connect a $1$-slot circuit after it, where the operations on the two `teeth' correspond to SWAP operations between the ancilla and target systems, then regardless of the operation inserted into the last slot, the output process of the entire comb $\widehat{\cN}_{j, \textrm{out}}$ remains unchanged. Based on this observation, we can set the initial parameters of the $(m+1)$-slot comb by taking the trained parameters of the $m$-slot comb for the first $m+1$ teeth, and then, add two new teeth whose parameters are trained to be the SWAP gate. 
This will give a good initialization and significantly speed up the training process. 
Detailed optimization procedures are summarized in Supplementary Note 1~\footnote{See Supplementary Note 1 in the supplementary information.}.

In the next two subsections, we introduce several practical applications to showcase the value of the PQComb method. Most notably, it has enabled us to develop a protocol that perfectly implements the qubit unitary inversion, i.e., realizing $f(U^{(1)}, \ldots, U^{(m)}) = U^{-1}$. Compared to the previous protocol in~\cite{Yoshida2023}, the PQComb-derived protocol reduces the required number of ancilla qubits from six to three and simplifies the circuit implementation. Furthermore, this approach inspired the first algorithm capable of achieving unitary inversion in arbitrary dimensions deterministically and exactly~\cite{chen2024quantum}. We will first introduce the task of unitary inversion, followed by a detailed explanation of how the final protocol is obtained using PQComb. The performance of the proposed protocol is further highlighted under various noise models. Additionally, we also explore the applications in channel discrimination and qutrit unitary transformation, illustrating that PQComb has broad potential beyond process transformation and can handle problems involving larger slot numbers, which are numerically intractable using the SDP approach.

\subsection{Qubit-unitary inversion}~\label{sec:2-unitary inv}

The time evolution of a closed quantum system can be characterized by a unitary operator $U = e^{-iHt}$ with a Hamiltonian $H$ and time $t$. One can always reverse this transformation via the inverse operation $U^{-1} = e^{iHt}$.
The reversible nature of quantum unitary reveals a fundamental distinction between quantum computing and classical computing, which also mirrors the time-reversal symmetry of the underlying quantum mechanics.

The simulation of time-reversed quantum unitary evolution is not only a conceptual cornerstone in the realm of quantum information~\cite{Aharonov1990}, but it also serves as a key technology for the manipulation of quantum systems. This intricate process is pivotal for measuring out-of-time-order correlators~\cite{Maldacena2016,Li2017e,Garttner2017}, which serve as diagnostics for quantum chaos and entanglement dynamics. Moreover, the ability to reverse an unknown unitary evolution is an important building block for quantum algorithms (e.g., quantum-signal-processing-based algorithms~\cite{low2017optimal, gilyen2019quantum, wang2023quantum}), underscoring its significance in advancing quantum computational capabilities.

\begin{figure*}[t]
\centering
\[
\Qcircuit @C=1em @R=1em {
    & \lstick{\ket{0}} & \qw
& \qw & \multigate{2}{G}
& \gate{X}   & \multigate{2}{G^\dag} &\qw
& \qw & \multigate{2}{G}  
& \gate{X} & \multigate{2}{G^\dag}  
& \qw & \qw & \qw
& \rstick{\ket{0}} \qw\\
    & \lstick{\ket{0}} & \gate{H}
& \multigate{2}{\cQ_{\Uin}} & \ghost{G}
& \multigate{2}{\cQ_{\Uin}}  & \ghost{G^\dag}
& \multigate{2}{\cQ_{\Uin}} & \ctrlo{1} & \ghost{G}
& \multigate{2}{\cQ_{\Uin}} & \ghost{G^\dag}
& \gate{H} & \qw & \ctrl{2}
& \qw \backslash \\
    & \lstick{\ket{0}} & \gate{H} 
& \ghost{\cQ_{\Uin}} & \ghost{G}
& \ghost{\cQ_{\Uin}}  & \ghost{G^\dag} 
& \ghost{\cQ_{\Uin}} & \gate{-Z} & \ghost{G} 
& \ghost{\cQ_{\Uin}} & \ghost{G^\dag} 
& \gate{H} & \ctrl{1} & \qw
& \qw \backslash \\
    & \lstick{\ket{\varphi}} & \qw 
& \ghost{\cQ_{\Uin}} & \qw 
& \ghost{\cQ_{\Uin}} & \qw 
& \ghost{\cQ_{\Uin}} & \qw & \qw
& \ghost{\cQ_{\Uin}} & \qw 
& \qw & \gate{Y} & \gate{X}
& \rstick{\Uin^{-1} \ket{\varphi}} \qw}
\]
\caption{The proposed unitary inversion protocol for arbitrary single-qubit unitary $\Uin$. One can use 3 ancilla qubits and 4 queries of $\Uin$ to realize qubit-unitary inversion. Note that the output state of the first ancilla qubit will be a zero state without post-selection. The implementations of $\cQ_{\Uin}$ and $G$ are deferred to Supplementary Note 2~\cite{Note2}.}
\label{fig:4 3}
\end{figure*}
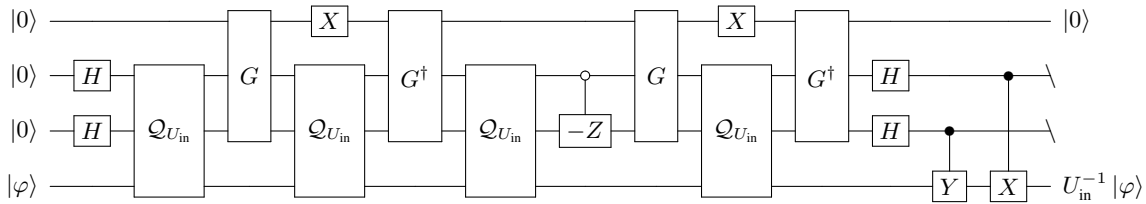

Reversing an unknown unitary evolution presents a notable challenge since it typically requires complete knowledge of the system but the information of a physics system in nature is often beyond our grasp. For the implementation of the inverse operation, one must have an exact characterization of the unitary transformation or the underlying Hamiltonian. However, quantum process tomography, the standard technique for such characterization of unitary operation, demands an impractically large number of measurements to fully describe a quantum process~\cite{Baldwin2014,Gutoski2014,Mohseni2008,Haah2023}. This requirement renders the exact reversal of a general unknown unitary operation impractical, as the conventional approach of learning and inverting is resource-prohibitive. 

While process tomography is challenging, simulating the unitary inverse $U^{-1}$ using the original unitary operation $U$ is still possible.
Higher-order transformations of quantum dynamics provide a potentially feasible approach for transforming an unknown unitary to its inverse. 
In particular, Ref.~\cite{Quintino2019, yang2021representation, trillo2023universal} introduced probabilistic universal quantum algorithms that execute the exact inversion of an unknown unitary operation. Ref.~\cite{Yoshida2023} further established the first deterministic and exact protocol for reversing any unknown qubit-unitary operations based on the SDP approach. 
To numerically handle the program with 4 slots, they imposed specific symmetry conditions on the Choi operator of the comb, resulting in a circuit with at least six ancilla qubits. However, whether these symmetry conditions are necessary, that is, whether the number of ancilla qubits can be further reduced, remains an open question.

\subsubsection{Deterministic and exact protocols by PQComb}

In this subsection, we address the problem by applying PQComb to the task and present a unitary inversion protocol that uses only three ancilla qubits. 
Here we denote two systems in this structure: the \emph{main system}, where the input unitary $\Uin$ operates, and the \emph{ancilla system}, for other qubits. The main system accepts an arbitrary state $\ket{\varphi}$ as input and is expected to output $\Uin^{-1} \ket{\varphi}$. The ancilla system, consisting of $n_a$ qubits, starts in the zero state and would be traced out at the end of the quantum comb.

For this task, we choose the comb-based loss function $\cL_c$ to train parameters of the circuit, where the performance operator $\O$ in Equation~\eqref{eqn:comb loss} is
\begin{align}
    \O
    \approx \frac{1}{N} \sum_{j = 1}^N \ketbravec{U_j^{-1}}{U_j^{-1}} \ox \ketbravec{\overline{U_j}}{\overline{U_j}}^{\ox m} \, .
\end{align}
Here $\kett{U} = \sum_k (U\otimes I) \ket{k} \ket{k}$ corresponds to the Choi operator of unitary gate $U$, and the set $\set{U_j}_{k=1}^N$ is randomly sampled from the special unitary group $\su(2)$ with size $N = 10^4$. 
The ansatz we used is shown in Supplementary Note 2~\footnote{See Supplementary Note 2 in the supplementary information.}.
One can then follow the optimization procedure in Figure~\ref{fig:pqcomb} to experimentally find the protocol with the optimal loss function for each setting $(m, n_a)$, as summarized in Table~\ref{tab:2-unitary inv}. 
Notably, the average similarity obtained by the PQComb matches the optimal value for $1 \leq m \leq 5$ within a tolerance of $1\cdot 10^{-3}$~\cite{Yoshida2023}.

\begin{table}[htbp]
\centering 
\caption{ The table summarizing the maximal average similarity obtained by the PQComb for qubit-unitary inversion task, under different pairs of $(m, n_a)$. Here $m$, $n_a$ are the number of slots and ancilla qubits, respectively; $\infty$ refers to the case when the number of ancilla is unlimited, where the optimal value is given by SDP~\cite{Yoshida2023}. }
\label{tab:2-unitary inv}
\begin{tabular}{c|cccccc}
\toprule
$m \backslash n_a$ & 0 & 1 & 2 & 3 & {$\cdots$} & $\infty$ \\
\midrule
\addlinespace
1 & {0.500} & \multicolumn{4}{r}{$\xrightarrow{\hspace*{8.5em}}$} & 0.500  \\
\addlinespace
2 & \lowlight{0.34} & \lowlight{0.63} & \lowlight{0.73} & {0.750} & {$\xrightarrow{\hspace*{1em}}$} & 0.750 \\
\addlinespace
3 & \lowlight{0.37} & \lowlight{0.69} & {0.933} & \multicolumn{2}{r}{$\xrightarrow{\hspace*{3.5em}}$} & 0.933 \\
\midrule
\addlinespace
4 & \lowlight{0.48} & \lowlight{0.78} & \lowlight{0.96} & \highlight{0.999} & {$\xrightarrow{\hspace*{1em}}$} & {1} \\
\addlinespace
5 & \lowlight{0.72} & \lowlight{0.82} & \highlight{0.999} & \multicolumn{2}{r}{$\xrightarrow{\hspace*{3.5em}}$} & {1}  \\
\addlinespace
\bottomrule
\end{tabular}
\end{table}

Table~\ref{tab:2-unitary inv} indicates that by utilizing three ancilla qubits and four queries of the unitary operator $\Uin$, PQComb is capable of providing a near-exact and deterministic protocol to approximate $\Uin^{-1}$.
After achieving this protocol, we further refine our optimization based on the current structure, which leads to a more streamlined training ansatz that suppresses the average dissimilarity to $10^{-6}$, and eventually resulting an exact and deterministic protocol illustrated in Theorem~\ref{thm:4 3}. More details for ansatz selection and refinement are deferred to Supplementary Note 2~\cite{Note2}.

\begin{theorem}[3-ancilla 4-call Protocol]~\label{thm:4 3}
    There exists a quantum circuit implementing $\Uin^{-1}$ by 3 ancilla qubits and 4 calls of a single-qubit unitary $\Uin$, such that
\begin{equation}
\begin{aligned}
    & \quad \trace[23]{\fourcir(\Uin) \left(\ketbra{000}{000} \ox \rho \right) \fourcir(\Uin)^\dag} \\
    &= \ketbra{0}{0}_1 \ox \Uin^{-1} \rho \Uin
,\end{aligned}
\end{equation}
where $\fourcir(\Uin)$ gives the unitary matrix of the output process.
\end{theorem}

\noindent\textit{Sketch of Proof.} For $\Uin \in \su(2)$, a decomposition on Pauli basis is $\Uin = \cos(\theta / 2) I - i\sin(\theta / 2) \Vec{n} \cdot \Vec{\sigma} $, with $\Vec{n} = (n_x, n_y, n_z)$ respective to the coefficients of Pauli operators. Then the output state of the circuit in Figure~\ref{fig:4 3} is
\begin{equation}
\begin{aligned}
\dfrac{1}{2} \ket{0} \ox ( 
&\cos{\dfrac{\theta}{2}} \ket{00}  - i\sin{\dfrac{\theta}{2}} n_{y} \ket{01} - \\
&i\sin{\dfrac{\theta}{2}} n_{x} \ket{10}- i\sin{\dfrac{\theta}{2}} n_{z} \ket{11} ) \ox \Uin^{-1} \ket{\varphi}
\end{aligned}
\end{equation}
and hence, the statement follows. More details are deferred to Supplementary Note 2~\cite{Note2}.

Inspired by the ansatz we obtained, we did further numerical experiments and discovered a circuit that deterministically and exactly implements the qubit unitary inversion querying $\Uin$ five times. In this protocol, all ancilla qubits are reset to $\ket{0}$ after the circuit execution. The significance of this finding is that it directly inspired an algorithm for achieving unitary inversion in arbitrary dimensions, thereby addressing a long-standing open problem~\cite{chen2024quantum}. The detailed circuit implementing this approach is presented in Supplementary Note 2~\cite{Note2} and is summarized in the following corollary.

\begin{corollary}[3-ancilla 5-call Protocol]~\label{coro:5 3}
    There exists a quantum circuit implementing $\Uin^{-1}$ by 3 ancilla qubits and 5 calls of a single-qubit unitary $\Uin$, such that
\begin{equation}
    \fivecir(\Uin) \ket{000, \psi} = \ket{000} \ox \Uin^{-1} \ket{\psi}
,\end{equation}
    where $\fivecir(\Uin)$ gives the unitary matrix of the output process.
\end{corollary}
For the sake of clarity and differentiation, we refer to the protocol in Theorem~\ref{thm:4 3} as the ``4-call protocol'' and to that in Corollary~\ref{coro:5 3} as the ``5-call protocol''.

\subsubsection{Noise simulation of qubit-unitary inversion protocols}~\label{sec:noise}

In the noisy intermediate-scale quantum (NISQ) era, devices are inevitably affected by noise, underscoring the necessity of evaluating the performance of quantum algorithms under realistic noise conditions. Given this context, it is crucial to evaluate the robustness of our proposed unitary inversion protocols under practical devices.
We simulated the performance of our entire circuit under realistic noise conditions by utilizing the IBM-Q cloud service. 
Our results showcase the performance of our protocols compared to the previous approach~\cite{Yoshida2023}, attributable to the reduced circuit width and depth facilitated by our more compact constructions. 
 
We consider the scenario where our entire circuit is affected by real-device noise. This simulation is based on the IBM-Q cloud service, with noise settings from five different IBM quantum devices. Under the same noisy model, both protocols demonstrate superior performance compared to the protocol introduced in Ref.~\cite{Yoshida2023}. This improvement can be attributed to the fact that our protocols have halved the number of ancilla qubits and reduced the compiled depth by a factor of five. These optimizations underscore the efficiency of these two protocols, showcasing PQComb as a hardware-efficient algorithmic designer for practical quantum devices. 

\begin{figure}[H]
\centering
\hspace{0.2em}
\subfloat{
\includegraphics[width=\linewidth]{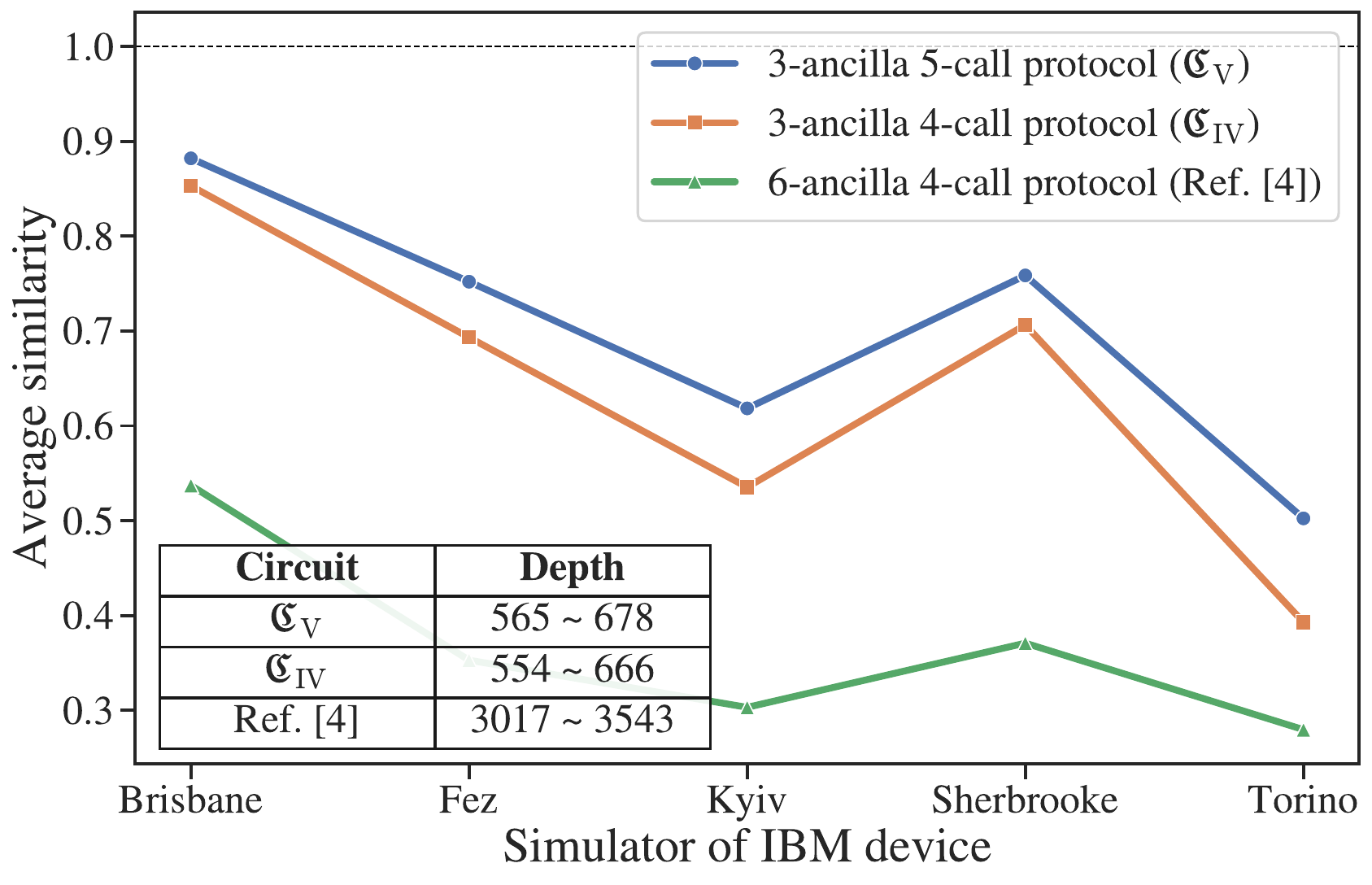}
}
\caption{Simulation of our two protocols and the previous protocol in Ref.~\cite{Yoshida2023} under the noise settings of five real quantum devices. Here we refer to the protocols in Theorem~\ref{thm:4 3} and Corollary~\ref{coro:5 3} as the ``4-call protocol'' and ``5-call protocol'', respectively.}
\label{fig:ibm noise}
\end{figure}

It is interesting to note that, the 5-call protocol demonstrates higher average similarities than the 4-call protocol across all these practical settings derived from real quantum devices. We guess this is because the 5-call protocol reset all three ancilla qubits to zero states, making it a clean protocol that all four qubit systems are decoherent from one another and hence be more robustness to experimental noise.
It is also worth noting that in this simulation our circuit has not yet been optimized for architecture. Optimizing at the circuit level may further enhance the performance of our protocol under noise conditions. The details regarding this simulation experiment can be found in Supplementary Note 2~\cite{Note2}.

\subsection{Other applications}~\label{sec:other app}

In addition to the task of qubit unitary inversion, we also applied PQComb to the tasks of qutrit unitary transformation and channel discrimination to demonstrate its broader applicability. The experimental results are available to our GitHub repository~\cite{coderepo}.

\subsubsection{Qutrit-unitary transformations}

For qutrit unitary transformations, we specifically focused on the problems of simulating the transpose and inverse of the input unitary. Notably, due to the larger number of queries required, the SDP approach faces the memory issue and cannot solve these problems. 

To address this, we employed the \emph{SWAP-based optimization method} to initialize the parameters and trained the PQComb using the \emph{process-based loss function}
\begin{equation}
    \cL_p(\bm \theta) = 1 - \frac{1}{3N} \sum_{j=1}^N   \langle\!\langle f(U_j) \lvert \, \cJ_{j, \textrm{out}}(\bm \theta) \, \lvert f(U_j) \rangle\!\rangle
,\end{equation}
where $\cJ_{j, \textrm{out}}$ is the Choi operator of the $j$-th output process and $f(U) = U^T \text{ or } U^{-1}$ is the target transformation. As system dimensions grow, we use $N = 10^4$ samples to train the circuit and test it with additional $10^5$ samples to evaluate protocol performance.
We derived near-perfect circuits for both transpose and inverse by using the input unitary seven and ten times respectively, each achieving a test average fidelity above $0.99$. Some of these numerical results are summarized in Table~\ref{tab:qutrit}.

\begin{table}[H]
\centering
\caption{Two tables summarizing the maximal average similarity obtained by the PQComb for qutrit-unitary inversion and transpose, respectively, under different pairs of $(m, n_a)$. Here $m$, $n_a$ are the number of slots and ancilla qutrits, respectively; $\infty$ refers to the case when the number of ancilla is unlimited, where the optimal value is given by SDP~\cite{Yoshida2023}.}\label{tab:qutrit}
\subfloat[$f(U) = U^{-1}$]{\resizebox{!}{0.065\textheight}{
\begin{tabular}{c|ccccc}
    \toprule
    $m \backslash n_a$& 1 & 2 & 3 & {$\cdots$} & $\infty$ \\
    \midrule
    1 & 0.222 & \multicolumn{3}{r}{$\xrightarrow{\hspace*{5.5em}}$} & 0.222 \\
    2 & 0.333 & \multicolumn{3}{r}{$\xrightarrow{\hspace*{5.5em}}$} & 0.333 \\
    3 & \lowlight{0.35} & \lowlight{0.39} & 0.429 & {$\mathrel{\leadsto}$} & 0.444 \\
    4 & \lowlight{0.35} & \lowlight{0.48} & 0.541 & {$\mathrel{\leadsto}$} & 0.556 \\
    5 & \lowlight{0.41} & \lowlight{0.53} & 0.664 & {$\mathrel{\leadsto}$} & 0.667\\
    10 & \lowlight{0.56} & \lowlight{0.73} & \highlight{0.995} & {$\mathrel{\leadsto}$} & $?$ \\
    \bottomrule
\end{tabular}
}}
\hspace{1em}
\subfloat[$f(U) = U^T$]{\resizebox{!}{0.065\textheight}{
\begin{tabular}{c|ccc}
    \toprule
    $m \backslash n_a$& 1 & 2 & 3 \\
    \midrule
    1 & \lowlight{0.22} & \lowlight{0.22} & \lowlight{0.22} \\
    2 & \lowlight{0.30} & \lowlight{0.37} & \lowlight{0.41} \\
    3 & \lowlight{0.31} & \lowlight{0.50} & \lowlight{0.60} \\
    4 & \lowlight{0.32} & \lowlight{0.63} & \lowlight{0.79} \\
    5 & \lowlight{0.41} & \lowlight{0.69} & \lowlight{0.91} \\
    7 & \lowlight{0.47} & \lowlight{0.85} & \highlight{0.994} \\
    \bottomrule
\end{tabular}
}}
\end{table}

For the unitary inversion task, when $m \leq 5$, our training results closely align with optimal fidelity obtained from the SDP method that utilizes symmetry conditions in the special unitary group $\su(3)$~\cite{Yoshida2023}. When further increasing the slot number, SDP methods become powerless, while our results show that qutrit-unitary inversion is nearly feasible by querying the input unitary ten times.
Additionally, these results are based on preliminary experiments with a universal ansatz, further refinement of the ansatz may lead to improved performance or fewer query numbers.

For unitary transpose, it is worth noting that our numerical results may provide insight into the problem discussed in~\cite{Odake2024}, where the authors derived lower bounds for simulating unitary inverse and transpose in arbitrary dimensions. Specifically, they showed that to realize the inverse requires at least $d^2$ queries, while the transpose may require only $\cO(d)$ queries. 
The only existing deterministic and exact high-dimensional protocol for unitary transpose was based on a variant of the unitary inverse protocol from~\cite{chen2024quantum}, which still requires $\cO(d^2)$ queries. Therefore, the exact query complexity for unitary transpose remained an open question. 
Our results lead to the conjecture that, simulating the transpose may be different from the inverse and could potentially be achieved in $\cO(d)$ queries. This insight, based on PQComb's numerical experiments, may inspire future research in this direction.

\subsubsection{Channel discrimination}
Additionally, we analyze the channel discrimination problem using PQComb. 
We note that channel discrimination is a quantum information task that distinguishes between two noise channels. Given finite copies of an unknown input channel selected from these two, the goal is to determine which channel it is.
For this task, we discriminate between two qubit channels: an amplitude damping channel $\cA$ and a bit flip channel $\cE$ with noise parameters 0.67 and 0.13, respectively. The task requires designing a quantum comb that produces binary output: 0 for channel $\cA$ and 1 for channel $\cE$. The discrimination performance is evaluated using a modified comb-based loss function:
\begin{align}
\cL_c(\bm \theta) = 1 &- \frac{1}{2} \trace{\bra{0}_\bmF C_{\bmV}(\bm \theta) \ket{0}_\bmF \cJ_\cA^{\ox m} } \\
&- \frac{1}{2}\trace{\bra{1}_\bmF C_{\bmV}(\bm \theta) \ket{1}_\bmF \cJ_\cE^{\ox m}}
,\end{align}
where $\ket{0}_\bmF, \ket{1}_\bmF$ represent the zero and one states in the final system, with identity operators omitted in other subsystems.

In~\cite{bavaresco2021strict}, the authors used the SDP approach to investigate this problem with $m=2$ and establish a strict hierarchy that the sequential protocol can strictly outperform any parallel protocol which queries the two channels simultaneously. 
Using our parameterized approach, we trained a 2-slot circuit with five ancilla qubits that achieved an average success probability of 0.8444 aligns with the previous result. 
As shown in~\cite{bavaresco2021strict}, parallel combs are limited to success probabilities below 0.844, the circuit we find exceeds this threshold. 
These findings illustrate the applicability of PQComb beyond unitary transformations and its ability to achieve results comparable to those obtained through SDP methods.

\section{Discussion}

In this work, we developed PQComb for exploring the capabilities of quantum combs in transforming quantum processes via the idea of supervised learning. Compared to the standard SDP method, this approach has the advantages in providing more flexible loss functions tailored to different tasks and resources, designing practical circuits for actual implementation, and exploring protocols beyond the computational limit of SDP problems.

One major contribution from our work is the creation of a more straightforward approach to reverse unknown qubit-unitary operations, derived from the PQComb's optimization for this specific task. The proposed new protocols simplify the circuit complexity and enhances the efficiency of qubit-unitary inversion by halving the circuit width. Such a reduction not only highlights the practical value of quantum comb structures but also illustrates PQComb's capacity for generating cutting-edge quantum protocols and algorithms. 
The hardware efficiency of our protocols is further demonstrated by noise simulations across various noise models. 
Here, we note that the detailed analysis of reversing qubit-unitary operations presented in Supplementary Note 2~\cite{Note2} enhances the understanding of the general unitary inversion task, which is subsequently extended to arbitrary dimensions in~\cite{chen2024quantum}. Together with the applications of qutrit-unitary inverse and transpose transformations and channel discrimination, PQComb is shown to be an effective, hardware-efficient, and versatile framework for designing practical quantum protocols.

As a PQC-based framework, the optimization methods discussed here can be integrated with the NISQ devices. By leveraging the power of quantum computing, PQComb may analyze problems that are hard to analyze classically. Future research directions include exploring ansatzes for quantum combs with different structures to solve various types of problems and developing novel quantum algorithms with the aid of PQComb.  
As PQComb is a highly adaptable framework, it can be applied to a wide range of quantum computing tasks by modifying the structure of the training dataset or the trainable ansatz. For example, by sampling the dataset from Clifford gates and restricting the ansatzes to be stabilizer circuits~\cite{gottesman1997stabilizer}, PQComb can be harnessed to investigate problems in fault-tolerant quantum computing. Alternatively, by tailoring the target transformation and the loss function, PQComb can be leveraged to train channel inversion~\cite{zhu2024reversing}, obtain transformations of Hamiltonian dynamics~\cite{quintino2022deterministic, odake2024higherorder, odake2023universal}, or estimate unknown parameters of quantum systems~\cite{liufullyoptimized}.
We believe the results in this paper could pave the way for the application of the parametrized quantum combs across quantum computing and machine learning domains, opening up new possibilities for future research and development in these fields.

\section*{Acknowledgement}
Y. M., L. Z., and Y.-A. Chen contributed equally to this work. We would like to thank Chengkai Zhu, Xuanqiang Zhao and Yu Gan for their helpful comments. Code used in the numerical experiments is available on \url{https://github.com/QuAIR/PQComb-codes}. Other code used in this study is available from the corresponding authors upon reasonable request. This work was partially supported by the National Key R\&D Program of China (Grant No.~2024YFE0102500), the Guangdong Provincial Quantum Science Strategic Initiative (Grant No.~GDZX2403008, GDZX2403001), the Guangdong Provincial Key Lab of Integrated Communication, Sensing and Computation for Ubiquitous Internet of Things (Grant No.~2023B1212010007), the Quantum Science Center of Guangdong-Hong Kong-Macao Greater Bay Area, and the Education Bureau of Guangzhou Municipality.




\appendix

\vspace{3cm}
\onecolumngrid
\vspace{2cm}

\clearpage
\begin{center}
\Large{\textbf{Supplementary Information} \\ \textbf{
}}
\end{center}

\renewcommand{\theequation}{S\arabic{equation}}
\renewcommand{\thesubsection}{\normalsize{Supplementary Note \arabic{subsection}}}
\renewcommand{\theproposition}{S\arabic{proposition}}
\renewcommand{\thedefinition}{S\arabic{definition}}
\renewcommand{\thefigure}{S\arabic{figure}}
\renewcommand{\theHfigure}{S\arabic{figure}}
\setcounter{equation}{0}
\setcounter{table}{0}
\setcounter{section}{0}
\setcounter{subsection}{0}
\setcounter{proposition}{0}
\setcounter{definition}{0}
\setcounter{figure}{0}

\section{More details on PQComb}~\label{appendix:pqcomb}

\paragraph{\textbf{Settings of PQComb}.---} The parameterized quantum comb follows the structure of a sequential quantum comb presented in Figure~\ref{fig:sequential comb}. The notation adopted in this work is presented as follows: As a whole, the process transformation that this comb does is denoted by the symbol $\mathfrak{C}$, characterized by its Choi operator $C_\bmV$, where $\bmV$ denotes the operator set of all tooth. Looking inside the comb, $\bmP$ and $\bmF$ denote the input and output systems of the main register. $I_k$ and $O_k$ correspond to the input and output systems for the operation in the k-th slot. Additionally, $\Nin^{(k)}$ and $\Nout$ denote the $k$-th input process and the final output process highlighted by red boxes, respectively. Each tooth within the comb, corresponding to the $k$-th position, is characterized by a quantum operator $\cV_k$. Each two neighbor teeth $\cV_{k - 1}$ and $\cV_{k}$ share the aniclla system $\aux_k$. At the end of the circuit, the projector $\Pi$ determines the post-selection of the output auxiliary system. For the applications discussed in this paper, $\Pi$ is selected as the identity matrix, i.e., no post-selection happens and the output auxiliary system is just traced out. 

\begin{figure}[H]
    \centering
    \includegraphics[width=0.7\textwidth]{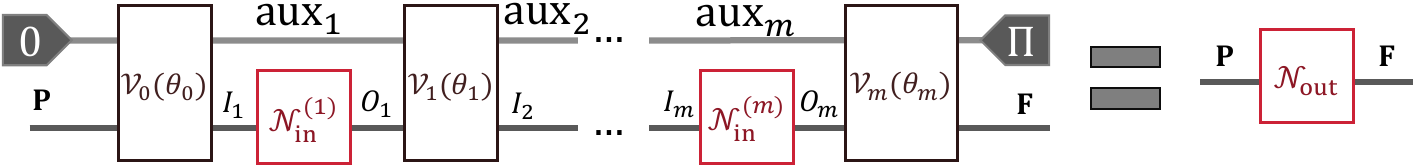}
    \caption{The schematic representation of a sequential quantum comb $\mathfrak{C}$.}
    \label{fig:sequential comb}
\end{figure}

Here we summarized the general optimization procedure for PQComb. See Figure 1 for more intuitive explanations.

\begin{algorithm}[H]
\KwIn{a quantum dataset $\cJ = \set{ \cN_{j, \textrm{in}}^{(k)} }_{j, k}$, a target transformation $f$ that outputs the expected process $\cN_{j, \textrm{out}} = f\left( \cN_{j, \textrm{in}}^{(1)}, \ldots, \cN_{j, \textrm{in}}^{(m)} \right)$, a similarity function $\cS$ that measures the similarity of two transformations.}
\KwOut{ a quantum sequential comb $\mathfrak{C}$ such that $\mathfrak{C}\left( \cN_{j, \textrm{in}}^{(1)}, \ldots, \cN_{j, \textrm{in}}^{(m)} \right) \approx \cN_{j, \textrm{out}}$ for all $\cN_{j, \textrm{in}}^{(k)} \in \cJ$.}

Depending on the size of dataset and characteristic of $f$, determine the $m$-slot sequential comb parameterized by trainable parameters $\bm \theta$, and the size $N$ of training set $\set{ \cN_{j, \textrm{in}}^{(k)} }_{j, k = 1}^{N, m}$. Denote the Choi of this parameterized comb as $C_{\bmV}(\bm \theta)$ \;

If the similarity function $\cS$ is a linear function by priori knowledge and $m$ is not large (depends on the classical computer), choose the comb-based optimization, i.e., the loss function is $\cL(\bm \theta) = 1 - \trace{C_\bmV({\bm \theta})\O}$\;

Otherwise, choose the process-based optimization, i.e., the loss function is $\cL(\bm \theta) =  1 - \sum_{j = 1}^N \cS(\widehat{\cN}_{j, \textrm{out}}({\bm \theta}), {\cN}_{j, \textrm{out}}) / N$ \;
Use classical optimization methods such as gradient descent method to obtain the optimal parameters ${\bm\theta}^* = \argmin_{\bm\theta} \cL(\bm \theta)$.  Then the quantum comb $C = C_{\bmV}({\bm\theta}^*)$ can obtain the minimum average dissimilarity with respect to the training set. Return $C$.
\caption{General optimization procedure in PQComb}
\label{alg:general}
\end{algorithm}
\vspace{2em}

\update{

\noindent Note that when one chooses the comb-based optimization, the performance operator can be constructed before optimization, as
\begin{equation}~\label{eqn:omega general}
    \O \approx \frac{1}{N} \sum_{j = 1}^{N} \left(\cJ_{j, \textrm{out}}\right)_{\bmP, \bmF} \ox \left(\cJ_{j, \textrm{in}}^{(1)}\right)_{I_1, O_1} \ox \ldots \ox \left(\cJ_{j, \textrm{in}}^{(m)}\right)_{I_m, O_m}
,\end{equation}
where $\cJ_{j, \textrm{out}}$ and $\cJ_{j, \textrm{in}}^{(k)}$ are the Choi representations of channels $\cN_{j, \textrm{out}}$ and $\cN_{j, \textrm{in}}^{(k)}$, respectively. 

\vspace{2em}
\paragraph{\textbf{SWAP-based optimization}.---} 
When the number of slots increases, initializing the parameters ${\bm \theta}_{\textrm{ini}}$ of the whole comb randomly may result in a poor initial value of the loss function. This not only prolongs the overall training process but also increases the likelihood of encountering local minima and other optimization issues. 
To better train PQComb with large slot numbers, we employ the SWAP-based optimization method. 
Before introducing how to initialize the parameters of an $(m+1)$-slot comb based on the trained parameters of an $m$-slot comb, we clarify the setting: 

\begin{enumerate}
    \item We consider a quantum sequential comb denoted by $\mathfrak{C}_m$, which consists of $m$ slots with dimension $d$. The ancilla system associated with this comb has dimension $d_a$ such that $d_a / d \in \NN$. 
    \item For $\mathfrak{C}_m$, we denote its parameters as ${\bm \theta}_{m}$, with ${\bm \theta}_{m}^\textrm{ini}$ be its initial value and ${\bm \theta}_{m}^\textrm{train}$ be the value after training. 
    \item Getting the $(m+1)$-slot comb from the $m$-slot comb is by sequentially connecting a parameterized 1-slot comb with the same ancilla system. The parameters of this 1-slot comb is denoted as ${\bm \theta}_{\textrm{con}}$
\end{enumerate}

Initially we will train this 1-slot comb $\mathfrak{C}_\textrm{SWAP}$ to get its initial parameters ${\bm \theta}_{\textrm{con}}^{\textrm{ini}}$ such that it performs like swaps one qudit in the ancilla system with the qudit in the main system, as shown below.

\begin{figure}[H]
    \centering
    \[
    \Qcircuit @C=1.5em @R=1.5em {
        & \lstick{\textrm{ancilla system of dimension } d_a / d} & \qw & \qw & \qw & \qw  \\
        & \lstick{\textrm{ancilla system of dimension } d} & \multigate{1}{\textrm{SWAP}_d} & \qw & \multigate{1}{\textrm{SWAP}_d} & \qw \\
        & \lstick{\textrm{Input state}} & \ghost{\textrm{SWAP}_d} & \gate{\textrm{Input slot}} & \ghost{\textrm{SWAP}_d} & \rstick{\textrm{Output state}} \qw
    }
    \]
    \caption{Circuit implementation of $\mathfrak{C}_\textrm{SWAP}$, where $\textrm{SWAP}_d$ is the generalized $d^2$-dimensional SWAP gate}
    \label{fig:CSWAP}
\end{figure}
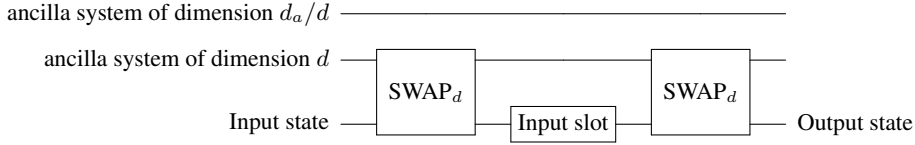

Based on this, we construct the $(m+1)$-slot comb by connecting the $m$-slot comb with this 1-slot comb as shown in Figure~\ref{fig:algorithm2}. 
Thus ${\bm \theta}_{m+1}$ consists of ${\bm \theta}_{m}$ and ${\bm \theta}_{\textrm{con}}$, which is initialized to be ${\bm \theta}_{m}^{\textrm{train}}$ and ${\bm \theta}_{\textrm{con}}^{\textrm{ini}}$. 
As the last 1-slot comb now performs as the identity channel on the main system, the loss function will start with the value obtained from training the $m$-slot comb. 

\begin{figure}[H]
    \centering
    \includegraphics[width=\linewidth]{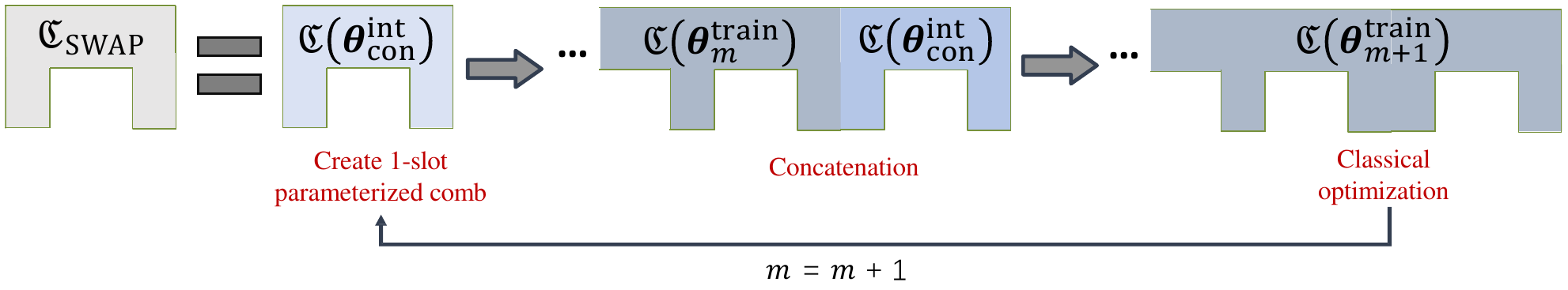}
    \caption{Graphical illustration of STEP 6 - 8 in Algorithm~\ref{alg:swap}.}
    \label{fig:algorithm2}
\end{figure}

This method can be used to train an $M$-slot protocol for some given tasks, starting from training a 1-slot comb. The detailed training process is shown in Algorithm~\ref{alg:swap}.

\begin{algorithm}[H]
\KwIn{A given transformation task as defined in Algorithm~\ref{alg:general}, and the maximal slot number $M$.}
\KwOut{A quantum sequential comb $\mathfrak{C}_M^*(\theta_M)$.}

Construct the 1-slot comb $\mathfrak{C}_{\textrm{con}}$ with parameter ${\bm \theta}_{\textrm{con}}$ for connection as shown in Figure~\ref{fig:algorithm2}\;
Train ${\bm \theta}_{\textrm{con}}$ to make it perform close to $\mathfrak{C}_\textrm{SWAP}$ as shown in Figure~\ref{fig:CSWAP}\;
Set $m=1$ and construct a 1-slot comb $\mathfrak{C}_1(\theta_1)$\;
Train $\theta_1$ based on Algorithm~\ref{alg:general} for the given transformation task to get the final parameters $\theta_1^\textrm{train}$\;
\While{$m < M$}
{
Construct the $(m+1)$-slot comb $\mathfrak{C}_{m+1}(\theta_{m+1})$ by connecting the $m$-slot comb $\mathfrak{C}_{m}(\theta_{m})$ with $\mathfrak{C}_\textrm{SWAP}$\;
Initialize $\theta_{m+1}^{\textrm{ini}}$ to be ${\bm \theta}_{m}^{\textrm{train}}$ and ${\bm \theta}_{\textrm{con}}^{\textrm{ini}}$\;
Train $\theta_{m+1}$ based on Algorithm~\ref{alg:general} for the given transformation task to get the final parameters $\theta_{m+1}^\textrm{train}$
}
Output $\mathfrak{C}^* = \mathfrak{C}_m$;
\caption{SWAP-based optimization procedure in PQComb}
\label{alg:swap}
\end{algorithm}
}

\clearpage
\section{Qubit-unitary inversion task}~\label{appendix:2-unitary inv}

\paragraph{\textbf{Detailed proof for Theorem 1}.---} In this subsection, we give the detailed proof for Theorem 1, and the circuit for the ``4-call protocol'' and the ``5-call protocol'' are shown below.

\begin{figure}[htbp]
\centering
\subfloat[]{\label{fig:4 3 with label}
\resizebox{0.85\textwidth}{!}{
\Qcircuit @C=1em @R=1em {
    & \lstick{\ket{0}_1} & \qw
& \qw & \multigate{2}{G}\barrier[0em]{3} 
& \qw \raisebox{3.5em}{$\ket{\Psi_{\text{I}}}$}
&  \gate{X}   & \multigate{2}{G^\dag}\barrier[0em]{3} 
& \qw \raisebox{3.5em}{$\ket{\Psi_{\text{II}}}$}  &\qw
& \qw & \multigate{2}{G} \barrier[0em]{3} 
& \qw \raisebox{3.5em}{$\ket{\Psi_{\text{III}}}$}  
& \gate{X} & \multigate{2}{G^\dag} \barrier[0em]{3} 
& \qw \raisebox{3.5em}{$\ket{\Psi_{\text{IV}}}$}  
& \qw & \qw & \qw
& \rstick{\ket{0}_1} \qw\\
    & \lstick{\ket{0}_2} & \gate{H}
& \multigate{2}{\cQ_{\Uin}} & \ghost{G} & \qw 
& \multigate{2}{\cQ_{\Uin}}  & \ghost{G^\dag} & \qw
& \multigate{2}{\cQ_{\Uin}} & \ctrlo{1} & \ghost{G} & \qw
& \multigate{2}{\cQ_{\Uin}} & \ghost{G^\dag} & \qw
& \gate{H} & \qw & \ctrl{2}
& \qw \backslash \\
    & \lstick{\ket{0}_3} & \gate{H} 
& \ghost{\cQ_{\Uin}} & \ghost{G} & \qw 
& \ghost{\cQ_{\Uin}}  & \ghost{G^\dag} & \qw
& \ghost{\cQ_{\Uin}} & \gate{-Z} & \ghost{G} & \qw
& \ghost{\cQ_{\Uin}} & \ghost{G^\dag} & \qw
& \gate{H} & \ctrl{1} & \qw
& \qw \backslash \\
    & \lstick{\ket{\varphi}} & \qw 
& \ghost{\cQ_{\Uin}} & \qw & \qw
& \ghost{\cQ_{\Uin}} & \qw & \qw
& \ghost{\cQ_{\Uin}} & \qw & \qw & \qw
& \ghost{\cQ_{\Uin}} & \qw  & \qw
& \qw & \gate{Y} & \gate{X}
& \rstick{\Uin^{-1} \ket{\varphi}} \qw}
}
}

\subfloat[]{~\label{fig:5 3 deterministic}
\resizebox{0.75\textwidth}{!}{
\Qcircuit @C=0.8em @R=0.8em {
    & \lstick{\ket{0}} & \qw
& \qw & \multigate{2}{G}
& \gate{X} & \multigate{2}{G^\dag}
& \qw & \qw & \multigate{2}{G}
& \gate{X} & \multigate{2}{G^\dag}\barrier[0em]{3}
& \qw \raisebox{3.5em}{$\ket{\Psi_{\text{IV}}}$}
& \qw 
& \qw
& \rstick{\ket{0}} \qw\\
    & \lstick{\ket{0}} & \gate{H} 
& \multigate{2}{\cQ_{\Uin}} & \ghost{G} 
& \multigate{2}{\cQ_{\Uin}} & \ghost{G^\dag}
& \multigate{2}{\cQ_{\Uin}} & \ctrlo{1} & \ghost{G} 
& \multigate{2}{\cQ_{\Uin}} & \ghost{G^\dag}
& \qw
& \multigate{2}{\cQ_{\Uin}} 
& \gate{H}
& \rstick{\ket{0}} \qw\\
    & \lstick{\ket{0}} & \gate{H} 
& \ghost{\cQ_{\Uin}} & \ghost{G}
& \ghost{\cQ_{\Uin}} & \ghost{G^\dag}
& \ghost{\cQ_{\Uin}} & \gate{-Z} & \ghost{G}
& \ghost{\cQ_{\Uin}} & \ghost{G^\dag}
& \qw
& \ghost{\cQ_{\Uin}} 
& \gate{H}
& \rstick{\ket{0}} \qw\\
    & \lstick{\ket{\varphi}} & \qw 
& \ghost{\cQ_{\Uin}} & \qw
& \ghost{\cQ_{\Uin}} & \qw
& \ghost{\cQ_{\Uin}} & \qw & \qw
& \ghost{\cQ_{\Uin}} & \qw
& \qw
& \ghost{\cQ_{\Uin}} 
& \qw
& \rstick{\Uin^{-1} \ket{\varphi}} \qw}
}}

\subfloat[]{
\Qcircuit @C=0.8em @R=1.8em {
    & \qw & \ctrl{2} & \qw & \ctrl{2} & \qw & \qw \\
    \lstick{Q_{\Uin} = } & \ctrl{1} & \qw & \qw & \qw & \ctrl{1} & \qw  \\
    & \gate{X} & \gate{Y} & \gate{\Uin} & \gate{Y} & \gate{X} & \qw}
}
\hspace{4em}
\subfloat[]{~\label{fig:G decomp}
\Qcircuit @C=0.8em @R=1.2em {
    & \qw & \qw & \qw & \gate{X}  & \ctrl{1} & \qw \\
    \lstick{G = } & \gate{H} & \ctrlo{1} & \gate{H} & \ctrlo{-1} & \multigate{1}{F} & \qw  \\
    & \gate{H} & \gate{-Z} & \gate{H} & \ctrlo{-1} & \ghost{F} & \qw
    }
}
\caption{ \textbf{(a)} An example of implementing $\fourcir(\Uin)$ in Theorem 1. \textbf{(b)} An example of implementing $\fivecir(\Uin)$ in Corollary 2. \textbf{(c)} Circuit implementation of $\cQ_{\Uin}$. \textbf{(d)} Circuit implementation of $G$, where $F$ is a two-qubit unitary such that $F \ket{00} = \left( \ket{01} + \ket{10} + \ket{11} \right) / \sqrt{3}$.}
\end{figure}

\setcounter{theorem}{0}
\begin{theorem}
    $\fourcir(\Uin)$ in Figure~\ref{fig:4 3 with label} satisfies
\begin{equation}
    \trace[23]{\fourcir(\Uin) \ket{000}_{123}\ket{\psi}} = \ket{0}_1 \ox \Uin^{-1} \ket{\psi}
.\end{equation}
\end{theorem}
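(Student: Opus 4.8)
The plan is to verify the circuit of Figure~\ref{fig:4 3 with label} by direct state propagation, organised around the linear-combination-of-unitaries (LCU) identity that makes single-qubit inversion feasible with a constant number of calls. Write $\Uin = \cos\tfrac\theta2\,I - i\sin\tfrac\theta2\,(n_x X + n_y Y + n_z Z)\in\su(2)$ with $n_x^2+n_y^2+n_z^2=1$. Since $\det\Uin=1$, the Cayley--Hamilton theorem gives $\Uin^{-1}=\tr(\Uin)\,I-\Uin = 2\cos\tfrac\theta2\,I-\Uin$; combined with the Pauli-twirl identity $\sum_{P\in\{I,X,Y,Z\}}P\,\Uin\,P = 2\tr(\Uin)\,I$, which lets us rewrite $2\cos\tfrac\theta2\,I = \tfrac12(\Uin+X\Uin X+Y\Uin Y+Z\Uin Z)$, this yields
\begin{equation}\label{eq:lcu-plan}
\Uin^{-1} \;=\; \tfrac12\bigl(-\Uin + X\Uin X + Y\Uin Y + Z\Uin Z\bigr).
\end{equation}
It therefore suffices to show that the network of Figure~\ref{fig:4 3 with label} deterministically realises this particular weighted combination of the four Pauli-twirled copies of $\Uin$ using its four calls.

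The first ingredient is a conditional-twirl lemma for the gadget $\cQ_{\Uin}$ of Figure~\ref{fig:4 3}(b): a one-line computation gives $\cQ_{\Uin}\,\ket{a}_2\ket{b}_3\ket{\xi}_4 = \ket{a}_2\ket{b}_3\otimes\bigl(X^{b}Y^{a}\,\Uin\,Y^{a}X^{b}\bigr)\ket{\xi}_4$ for $a,b\in\{0,1\}$, and $X^{b}Y^{a}\Uin Y^{a}X^{b}$ runs through $\Uin,\,X\Uin X,\,Y\Uin Y,\,Z\Uin Z$ as $(a,b)$ runs through $(0,0),(0,1),(1,0),(1,1)$ (using $XY=iZ=-YX$). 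So once the control ancillas $2,3$ are set to $\ket{{+}{+}}$ by the initial Hadamards, one use of $\cQ_{\Uin}$ prepares the \emph{equal-weight} LCU of these four operators, which on its own only succeeds probabilistically; the remaining three calls to $\Uin$, interleaved with the two adaptive blocks $G,\,X,\,G^{\dagger}$, are what promote it to a deterministic implementation of~\eqref{eq:lcu-plan}. I would then unpack $G$ from Figure~\ref{fig:4 3}(c) --- a Hadamard-conjugated controlled-$(-Z)$ between qubits $2,3$, a Toffoli-type gate flipping the flag qubit $1$ when qubits $2,3$ are in $\ket{00}$, and a $\ket{1}_1$-controlled $F$ with $F\ket{00}=(\ket{01}+\ket{10}+\ket{11})/\sqrt3$ --- and read its action as: detect whether the control ancillas are in the failure (all-$I$) branch, flag it on qubit $1$, and, when flagged, use $F$ to re-spread that branch uniformly over the three nontrivial Pauli labels so that the next call to $\Uin$ (via $\cQ_{\Uin}$) can act on it.

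With these ingredients in hand, the core of the argument is to evaluate the marked intermediate states $\ket{\Psi_{\text{I}}},\ket{\Psi_{\text{II}}},\ket{\Psi_{\text{III}}},\ket{\Psi_{\text{IV}}}$ of Figure~\ref{fig:4 3 with label} one block at a time, applying the conditional-twirl lemma each time a $\cQ_{\Uin}$ is crossed and carrying along the flag qubit $1$ and the inter-round controlled-$(-Z)$, and then to apply the closing Hadamards on qubits $2,3$ followed by the two controlled Paulis onto qubit $4$ and collect terms. The computation should close up to
\begin{equation}
\ket{\Psi_{\text{OUT}}} \;\propto\; \ket{0}_1\otimes\Bigl(\cos\tfrac\theta2\ket{00}-i\sin\tfrac\theta2\bigl(n_y\ket{01}+n_x\ket{10}+n_z\ket{11}\bigr)\Bigr)_{23}\otimes \Uin^{-1}\ket{\psi}_4 ,
\end{equation}
whose $23$-register factor is a fixed normalised state independent of $\psi$; tracing out qubits $2,3$ then leaves qubit $1$ back in $\ket{0}$ and qubit $4$ in $\Uin^{-1}\ket{\psi}$, which is the theorem. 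Corollary~\ref{coro:5 3} follows from essentially the same computation, spending one further $\cQ_{\Uin}$ slot to also disentangle and reset qubits $2,3$.

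The hard part is not identity~\eqref{eq:lcu-plan}, which is immediate, but the bookkeeping just described: one must faithfully propagate the three-way entanglement between the flag qubit $1$, the control ancillas $2,3$, and the target $4$ through two complete rounds of encode ($G$) / recohere ($X$ and $\cQ_{\Uin}$) / decode ($G^{\dagger}$), and check that the amplitudes produced by $F$ and by the inter-round controlled-$(-Z)$ are precisely those that make the failure contributions cancel against the success ones so that only~\eqref{eq:lcu-plan} survives at the output. I would keep the number of terms under control by a case analysis on the flag qubit $1$ after the first $G$ (failure versus non-failure branch), so that each intermediate state involves only a handful of terms.
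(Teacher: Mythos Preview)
Your proposal is correct and follows essentially the same approach as the paper: both start from the identity $2\Uin^{-1}=-\Uin+X\Uin X+Y\Uin Y+Z\Uin Z$, identify the conditional-twirl action of $\cQ_{\Uin}$ on the control qubits, and then propagate the state through the four marked stages $\ket{\Psi_{\text{I}}},\ldots,\ket{\Psi_{\text{IV}}}$ to reach the same final product state, after which tracing out qubits $2,3$ yields the claim. The paper carries out this block-by-block computation explicitly where you only outline it, and offers less of the failure/re-spread intuition you give for $G$, but the structure and ingredients are identical.
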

\begin{proof}
Without loss of generality, suppose the determinant of $\Uin$ is 1 i.e., $\Uin \in \su(2)$. Notice that there is a linear relation between qubit unitary $\Uin \in \su(2)$ and its inversion
\begin{equation}
    2 \Uin^{-1} =  X \Uin X + Y \Uin Y + Z \Uin Z - \Uin
.\end{equation}
In the rest of proof, we analyze the states at four stages in Fig.~\ref{fig:4 3 with label}.
The circuit is initialized with the input state $\ket{ \Psi_{\text{in}}}$, during the first step
\begin{alignat}{1}
    \ket{ \Psi_{\text{in}}}   \xrightarrow{H} & \ket{0}  \ox \ket{+} \ox \ket{+} \ox \ket{\varphi}, \\
     \xrightarrow{\cQ_{\Uin}} & \dfrac{1}{2} \ket{0} \ox \left(\ket{00} \ox  \Uin\ket{\varphi } + \ket{01} \ox  X \Uin X \ket{\varphi } + \ket{10} \ox  Y \Uin Y\ket{\varphi } + \ket{11} \ox  Z \Uin Z\ket{\varphi } \right),  \\
     \xrightarrow{O}& \dfrac{1}{2} \ket{0} \ox \left(\ket{00} \ox  \Uin^{-1}\ket{\varphi } + \ket{01} \ox  X \Uin ^{-1}X \ket{\varphi } + \ket{10} \ox  Y \Uin ^{-1}Y\ket{\varphi } + \ket{11} \ox  Z \Uin^{-1}Z\ket{\varphi } \right).
\end{alignat}
We make the following notation to make the deduction smoother:
\begin{align}
    \ket{0^{\perp}} &:= \dfrac{\left(\ket{01} + \ket{10} + \ket{11}\right)}{\sqrt{3}} \\
    O &:= H^{\ox 2} \cdot \left( -\ketbra{0}{0} \ox Z + \ketbra{1}{1} \ox I \right)
\end{align}
It is seen that $O$ is a part of $G$ in Figure~\ref{fig:G decomp}.Applying the rest part of the defined block $G$ in Fig.~\ref{fig:G decomp} gives
\begin{alignat}{3}
\ket{\Psi_{\text{I}}} =& \dfrac{1}{2} \ket{1} \ox \ket{0^{\perp}} \ox \Uin^{-1}\ket{\varphi} + \dfrac{1}{2} \ket{0} \ox (
        &&\ket{01} \ox  X \Uin^{-1}X \ket{\varphi} + \\
{}  &{} &&\ket{10} \ox  Y \Uin^{-1}Y\ket{\varphi} + \\
{}  &{} &&\ket{11} \ox  Z \Uin^{-1} Z\ket{\varphi}), \\
 \xrightarrow{X} & \dfrac{1}{2} \ket{0} \ox \ket{0^{\perp}} \ox \Uin^{-1}\ket{\varphi} + \dfrac{1}{2} \ket{1} \ox (
        &&\ket{01} \ox  X \Uin^{-1}X \ket{\varphi} + \\
{}  &{} &&\ket{10} \ox  Y \Uin^{-1}Y\ket{\varphi} + \\
{}  &{} &&\ket{11} \ox  Z \Uin^{-1} Z\ket{\varphi}), \\
 \xrightarrow{\cQ_{\Uin}}& \dfrac{-\sqrt{3}}{2} \ket{1} \ox \ket{0^{\perp}}  \ox  \ket{\varphi}-\dfrac{1}{2\sqrt{3}} \ket{0} \ox (
        &&\ket{01} \ox  X \Uin X \Uin^{-1} \ket{\varphi} + \\
{}  &{} &&\ket{10} \ox  Y \Uin Y \Uin^{-1} \ket{\varphi}  + \\
{}  &{} &&\ket{11} \ox  Z \Uin Z \Uin^{-1} \ket{\varphi} ) .              
\end{alignat}
Applying $G^{\dag}$, it turns to
\begin{alignat}{1}
    \ket{\Psi_{\text{II}}} = \dfrac{1}{2\sqrt{3}} \ket{0} \ox [
&\ket{00} \ox (\Uin^{-1}-\Uin) + \\
&\ket{01}  \ox (\Uin + X \Uin^{-1} X)  + \\
&\ket{10}  \ox (\Uin + Y \Uin^{-1}Y)  + \\
&\ket{11}  \ox (\Uin + Z \Uin^{-1}Z) ] \Uin^{-1} \ket{\varphi}.       
\end{alignat}
Notice that $\sigma_{i} \Uin \sigma_{i}(\Uin + \sigma_{i} \Uin^{-1} \sigma_{i}) \Uin^{-1} =  \Uin^{-1} + \sigma_{i} U \sigma_{i}$, then 
\begin{alignat}{2}
    \cQ_{\Uin} \ket{\Psi_{\text{II}}} =& \dfrac{1}{2\sqrt{3}} \ket{0} \ox [
    &&\ket{00} \ox (\Uin^{-1}- \Uin) + \\
    {} & {} &&\ket{01}  \ox (\Uin^{-1} + X \Uin X)  +\\
    {} & {} &&\ket{10}  \ox (\Uin^{-1} + Y \Uin Y)  +\\
    {} & {} &&\ket{11}  \ox (\Uin^{-1} + Z \Uin Z) ]  \ket{\varphi}, \\
    \xrightarrow{c(-Z)}& \dfrac{1}{2\sqrt{3}} \ket{0} \ox [
    &&\ket{00} \ox (\Uin - \Uin^{-1}) + \\
    {} & {} &&\ket{01}  \ox (\Uin^{-1} + X \Uin X)  + \\
    {} & {} &&\ket{10}  \ox (\Uin^{-1} + Y \Uin Y)  +\\
    {} & {} &&\ket{11}  \ox (\Uin^{-1} + Z \Uin Z) ] \ket{\varphi}, \\
    \xrightarrow{O}& \dfrac{-1}{2\sqrt{3}} \ket{0} \ox [
    &&\ket{00} \ox (3 \Uin^{-1}) + \\
    {} & {} &&\ket{01}  \ox (X \Uin^{-1} X)  + \\
    {} & {} &&\ket{10}  \ox (Y \Uin^{-1} Y)  + \\
    {} & {} &&\ket{11}  \ox (Z \Uin^{-1} Z) ]\ket{\varphi}. \\         
\end{alignat}
Applying the rest of second $G$ gives
\begin{alignat}{3}
    \ket{\Psi_{\text{III}}} =& \dfrac{-\sqrt{3}}{2} \ket{0} \ox \ket{0^{\perp}} \ox \Uin^{-1}\ket{\varphi} - \dfrac{1}{2\sqrt{3}} \ket{1}  \ox (
            &&\ket{01}  \ox  X \Uin^{-1} X \ket{\varphi}  + \\
    {}  &{} &&\ket{10}  \ox Y \Uin^{-1} Y \ket{\varphi}  + \\
    {}  &{} &&\ket{11}  \ox Z \Uin^{-1} Z\ket{\varphi} ), \\
    \xrightarrow{X}& \dfrac{-\sqrt{3}}{2} \ket{1} \ox \ket{0^{\perp}} \ox \Uin^{-1}\ket{\varphi} - \dfrac{1}{2\sqrt{3}} \ket{0}  \ox (
            &&\ket{01}  \ox  X \Uin^{-1} X \ket{\varphi}  + \\
    {}  &{} &&\ket{10}  \ox Y \Uin^{-1} Y \ket{\varphi}  + \\
    {}  &{} &&\ket{11}  \ox Z \Uin^{-1} Z\ket{\varphi} ), \\
    \xrightarrow{\cQ_{\Uin}}& \dfrac{-1}{2} \ket{1} \ox \ket{0^{\perp}}  \ox  \ket{\varphi} -\dfrac{\sqrt{3}}{2} \ket{0} \ox (
    &&\ket{01} \ox  X \Uin X \Uin^{-1} \ket{\varphi} + \\
    {}  &{} &&\ket{10} \ox  Y \Uin Y \Uin^{-1} \ket{\varphi}  + \\
    {}  &{} &&\ket{11} \ox  Z \Uin Z \Uin^{-1} \ket{\varphi} ) .
\end{alignat}
Applying $G^\dag$ again. It then turns to
\begin{alignat}{3}
    \ket{\Psi_{\text{IV}}} =& \dfrac{1}{2} \ket{0} \ox (
    &&\ket{00} \ox \Uin^{-1} \Uin^{-1}\ket{\varphi} + \\
    {} & {} &&\ket{01} \ox X \Uin^{-1} X \Uin^{-1} \ket{\varphi} + \\
    {} & {} &&\ket{10} \ox Y \Uin^{-1} Y \Uin^{-1}\ket{\varphi} + \\
    {} & {} &&\ket{11} \ox Z \Uin^{-1} Z \Uin^{-1}\ket{\varphi} ), \\
    \xrightarrow{H^{\ox 2}}& \dfrac{1}{4} \ket{0} \ox [&&\ket{00} \ox 2\text{Tr}(\Uin^{-1}) \Uin^{-1} \ket{\varphi}  + \\
    {} & {} &&\ket{01} \ox  (Y \Uin^{-1} Y - \Uin) \Uin^{-1} \ket{\varphi}  + \\
    {} & {} &&\ket{10} \ox  (X \Uin^{-1} X - \Uin) \Uin^{-1} \ket{\varphi}  + \\
    {} & {} &&\ket{11} \ox  (Z \Uin^{-1} Z - \Uin) \Uin^{-1} \ket{\varphi} ], \\
    \xrightarrow{\textrm{CX} \& \textrm{CY}}& \dfrac{1}{4} \ket{0} \ox [
    &&\ket{00} \ox 2\text{Tr}(\Uin^{-1}) \Uin^{-1} \ket{\varphi} + \\
    {} & {} &&\ket{01} \ox  (\Uin^{-1}Y - Y\Uin)  \Uin^{-1} \ket{\varphi} +\\
    {} & {} &&\ket{10} \ox  (\Uin^{-1}X - X\Uin) \Uin^{-1} \ket{\varphi} + \\
    {} & {} &&\ket{11} \ox  (\Uin^{-1}Z - Z\Uin) \Uin^{-1} \ket{\varphi}]  . 
\end{alignat}
Since the decomposition of $\Uin$ on Pauli basis is $\Uin = \cos{\frac{\theta}{2}} I - i\sin{\frac{\theta}{2}} \Vec{n}\cdot\Vec{\sigma} $, it can be verified 
\begin{equation}
    \Uin^{-1} \sigma_{i} -  \sigma_{i} \Uin = 2\sin{\frac{\theta}{2}} n_{i} I
\end{equation}
Finally, the output state of the circuit is
\begin{equation}
    \ket{\Psi_{\text{OUT}}} 
    = \dfrac{1}{2} \ket{0} \ox \left(\cos{\dfrac{\theta}{2}} \ket{00}  - i\sin{\dfrac{\theta}{2}} n_{y} \ket{01} - i\sin{\dfrac{\theta}{2}} n_{x} \ket{10} - i\sin{\dfrac{\theta}{2}} n_{z} \ket{11} \right) \ox \Uin^{-1} \ket{\varphi}
\end{equation}
\end{proof}

\setcounter{corollary}{1}
\begin{corollary}
    $\fivecir$ in Figure~\ref{fig:5 3 deterministic} satisfies
    $\fivecir(U) \ket{000, \psi} = \ket{000} \ox \Uin^{-1} \ket{\psi}$.
\end{corollary}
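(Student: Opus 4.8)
The plan is to reduce the corollary to Theorem~\ref{thm:4 3} by showing that one extra call of $\Uin$, together with a layer of Hadamards on the two ancilla qubits $2$ and $3$, cleans up the residual three-qubit state in the ancilla register so that it returns to $\ket{000}$. Concretely, I would start from the state $\ket{\Psi_{\text{IV}}}$ computed in the proof of Theorem~\ref{thm:4 3}, namely
\begin{equation}
    \ket{\Psi_{\text{IV}}} = \ket{0}_1 \ox \left( \ket{00} \ox U^{-1}U^{-1} + \ket{01}\ox XU^{-1}XU^{-1} + \ket{10}\ox YU^{-1}YU^{-1} + \ket{11}\ox ZU^{-1}ZU^{-1} \right)\ket{\varphi},
\end{equation}
which appears verbatim as the barrier-marked state in Figure~\ref{fig:5 3 deterministic}. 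In the 4-query protocol this was post-processed by $H^{\ox 2}$ and controlled Paulis to extract $U^{-1}\ket{\varphi}$ while leaving garbage (the Bloch components $\cos\frac\theta2, n_x, n_y, n_z$) in the ancillas. The idea of the 5-query protocol is different: instead of reading out a linear combination, apply one more $\cQ_{\Uin}$ block (the controlled-$X$/controlled-$Y$ gadget of Figure~\ref{fig:4 3}(b) wrapping a fifth copy of $U$) followed by $H\ox H$ on qubits $2,3$, and verify that this maps $\ket{\Psi_{\text{IV}}}$ exactly to $\ket{000}_{123}\ox U^{-1}\ket{\varphi}$.

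The key computational step is the following identity for the fifth query. Acting with $\cQ_{\Uin}$ on the four Pauli-conjugated branches replaces $\sigma_i U^{-1}\sigma_i U^{-1}$ with $\sigma_i U \sigma_i \cdot \sigma_i U^{-1}\sigma_i U^{-1} = U^{-1}\sigma_i U^{-1}$ type terms; more precisely, using $\sigma_i U \sigma_i = 2\,[\text{something}] - U$-style relations analogous to the boxed identity $2U^{-1} = XUX+YUY+ZUZ-U$ already used in the theorem, one checks that the amplitudes recombine so that after $H^{\ox 2}$ the three ancilla qubits collapse back to $\ket{00}$ (together with qubit $1$ in $\ket 0$), with the target qubit carrying exactly $U^{-1}\ket{\varphi}$. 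Since the full circuit is unitary (no post-selection, no partial trace), it suffices to track the single input $\ket{000,\varphi}$; the computation is the same bookkeeping of $4$ Pauli branches as in Theorem~\ref{thm:4 3}, just one stage longer. I would present it as a short continuation of the theorem's proof, reusing the notation $\ket{0^\perp}$, the block $G$, and the parametrization $U=\cos\frac\theta2 I - i\sin\frac\theta2\,\vec n\cdot\vec\sigma$.

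The main obstacle is purely a matter of matching the fifth-query gadget in Figure~\ref{fig:5 3 deterministic} to the algebra: one must confirm that the particular controlled gates chosen there (the $\cQ_{\Uin}$ block on the bottom two wires followed by $H$ on qubits $2$ and $3$) implement exactly the linear map that sends the four-term superposition in $\ket{\Psi_{\text{IV}}}$ to a product state. This is not conceptually deep — it is the reverse of the encoding that produced the branches in the first place — but it requires care with signs and with the $-Z$ versus $Z$ and $X,Y$ orderings so that all cross terms cancel. Once that identity is verified by direct substitution of the Pauli decomposition of $U$, the corollary follows immediately, and in particular the reset $\ket{000}$ on the ancillas (rather than a trace-out) is automatic because the output is a genuine product state rather than a reduced state after post-selection.
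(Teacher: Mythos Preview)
Your plan is correct and is exactly the paper's approach: the paper's proof simply says the computation continues by the same method as Theorem~\ref{thm:4 3} and omits the details, and the circuit in Figure~\ref{fig:5 3 deterministic} is designed so that one picks up from $\ket{\Psi_{\text{IV}}}$ and applies one more $\cQ_{\Uin}$ followed by $H\otimes H$. One simplification worth noting: the fifth-query step is easier than you indicate, since on each branch $\sigma_i U\sigma_i\cdot\sigma_i U^{-1}\sigma_i U^{-1}=U^{-1}$ follows immediately from $\sigma_i^2=I$ (your intermediate expression $U^{-1}\sigma_i U^{-1}$ is a slip, and the identity $2U^{-1}=XUX+YUY+ZUZ-U$ is not needed here); all four branches then carry $U^{-1}\ket{\varphi}$, and the final Hadamards send the resulting $\ket{+}\ket{+}$ on qubits $2,3$ back to $\ket{00}$.
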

\begin{proof}
    As shown in Figure~\ref{fig:5 3 deterministic}, the performance of the circuit $\fivecir$ on $\ket{000, \psi}$ can be directly calculated using the same method as above, where the matrices $\cQ_{\Uin}$ and $G$ are identical to those in Figure~\ref{fig:4 3 with label}. Therefore, we omit the specific calculation details here.
\end{proof}

\paragraph{\textbf{Training ansatzes}.---} In the main article, we assert that the circuit in Figure~\ref{fig:4 3 with label} is derived from the optimization results of PQComb. Figure~\ref{fig:ansatz derivation} substantiates this claim by detailing the process of circuit refinement:

\begin{figure}[H]
\centering
\subfloat[][Initial training ansatz that found 0.99 training fidelity, where each comb tooth is a 16-dimensional universal unitary tunable by 255 parameters.]{{ 
\resizebox{!}{0.012\textheight}{\Qcircuit @C=1em @R=1em {
    & \lstick{\ket{0}} & \multigate{3}{ \uni{16} } & \qw & \qw & 
\multigate{3}{ \uni{16} } & \qw & \qw_{\times 4} & \qw \backslash  \\
    & \lstick{\ket{0}} & \ghost{ \uni{16} } & \qw & \qw & 
\ghost{ \uni{16} } & \qw  & \qw & \qw \backslash  \\
    & \lstick{\ket{0}} & \ghost{ \uni{16} }  & \qw & \qw & 
\ghost{ \uni{16} } & \qw & \qw & \qw \backslash  \\
    & {} & \ghost{ \uni{16} } & \qw & \gate{\Uin} & 
\ghost{ \uni{16} } & \qw  & \qw & {}
    \gategroup{1}{5}{4}{6}{1.5em}{--}}}}
}
\hspace{2em}
\subfloat[][The ansatz further improved by guessing all entangled gates between ancilla qubits and the main system to be controlled universal qubit-gates $\textrm{U}3$.]{{ 
\resizebox{!}{0.012\textheight}{\Qcircuit @C=1em @R=1em {
    & \lstick{\ket{0}} & \multigate{2}{ \uni{8} } & \qw 
& \qw & \qw & \ctrl{3} & \qw & \ctrl{3} & \qw & \qw & \multigate{2}{ \uni{8} }
& \qw & \qw_{\times 4} & \qw & \ctrl{3} & \qw \backslash  \\
    & \lstick{\ket{0}} & \ghost{ \uni{8} } & \qw 
& \qw & \ctrl{2} & \qw & \qw & \qw & \ctrl{2} & \qw & \ghost{ \uni{8} }
& \qw & \qw & \ctrl{2} & \qw & \qw \backslash  \\
    & \lstick{\ket{0}} & \ghost{ \uni{8} }  & \qw 
& \ctrl{1} & \qw & \qw & \qw & \qw & \qw & \ctrl{1} & \ghost{ \uni{8} }
& \qw & \ctrl{1} & \qw & \qw & \qw \backslash  \\
    & {} & \qw & \qw 
& \gate{\textrm{U}3} & \gate{\textrm{U}3} & \gate{\textrm{U}3} & \gate{\Uin} & \gate{\textrm{U}3} & \gate{\textrm{U}3} & \gate{\textrm{U}3} & \qw
& \qw & \gate{\textrm{U}3} & \gate{\textrm{U}3} & \gate{\textrm{U}3} & {\qw}
    \gategroup{1}{5}{4}{12}{1.5em}{--}}}}
}
\hspace{2em}
\subfloat[][The ansatz further improved by guessing most of universal qubit-gates to be Pauli operators, and the first universal three-qubit gates to be Hadamard gates. The best training fidelity can approach 0.999 at this stage.]{{ 
\resizebox{!}{0.012\textheight}{\Qcircuit @C=1em @R=1em {
    & \lstick{\ket{0}} & \gate{H} & \qw 
& \qw & \qw & \ctrl{3} & \qw & \ctrl{3} & \qw & \qw & \multigate{2}{ \uni{8} }
& \qw & \qw_{\times 4} & \qw & \ctrl{3} & \qw \backslash  \\
    & \lstick{\ket{0}} & \gate{H} & \qw 
& \qw & \ctrl{2} & \qw & \qw & \qw & \ctrl{2} & \qw & \ghost{ \uni{8} }
& \qw & \qw & \ctrl{2} & \qw & \qw \backslash  \\
    & \lstick{\ket{0}} & \gate{H} & \qw 
& \ctrl{1} & \qw & \qw & \qw & \qw & \qw & \ctrl{1} & \ghost{ \uni{8} }
& \qw & \ctrl{1} & \qw & \qw & \qw \backslash  \\
    & {} & \qw & \qw 
& \gate{X} & \gate{Y} & \gate{Z} & \gate{\Uin} & \gate{Z} & \gate{Y} & \gate{X} & \qw
& \qw & \gate{\textrm{U}3} & \gate{\textrm{U}3} & \gate{\textrm{U}3} & {\qw}
    \gategroup{1}{5}{4}{12}{1.5em}{--}}}}
}
\hspace{2em}
\subfloat[][The ansatz further improved by guessing low coherence of the first ancilla qubit, which needs to be clean at the end: the amplitude on $\ket{0}$ is now proportional to the training fidelity.]{\label{fig:4 3 pqc} {
\resizebox{!}{0.012\textheight}{\Qcircuit @C=1em @R=1em {
    & \lstick{\ket{0}} & \qw & \qw & \qw & \qw & \qw & \qw & \qw & 
\multigate{2}{ \uni{8} } & \qw & \qw_{\times 4} & \qw & \rstick{\ket{0}} \qw  \\
    & \lstick{\ket{0}} & \gate{H} & \qw & \qw & \ctrl{2} & \qw & \ctrl{2} & \qw & 
\ghost{ \uni{8} } & \qw & \qw & \ctrl{2} & \qw \backslash  \\
    & \lstick{\ket{0}} & \gate{H}  & \qw & \ctrl{1} & \qw & \qw & \qw & \ctrl{1} & 
\ghost{ \uni{8} } & \qw & \ctrl{1} & \qw & \qw \backslash  \\
    & {} & \qw & \qw & \gate{X} & \gate{Y} & \gate{\Uin} & \gate{Y} & \gate{X} & 
\qw & \qw  & \gate{Y} & \gate{X} & \qw
    \gategroup{1}{5}{4}{10}{1.5em}{--}}}}}
\hspace{2em}
\subfloat[][The ansatz further improved by guessing the duality relations among four teeth of the comb. This ansatz can obtain fidelity close to 1, and is eventually reduced to the optimal circuit in Figure~\ref{fig:4 3 with label}.]{{
\resizebox{!}{0.012\textheight}{\Qcircuit @C=1em @R=1em {
    & \lstick{\ket{0}} & \qw
& \qw & \multigate{2}{G \coloneqq \uni{8}}\barrier[0em]{3}
& \qw &  \gate{\textrm{U}3}   & \multigate{2}{G^\dag}\barrier[0em]{3}
& \qw & \qw & \qw & \multigate{2}{G}\barrier[0em]{3}
& \qw & \gate{\textrm{U}3} & \multigate{2}{G^\dag}\barrier[0em]{3}
& \qw & \qw & \qw & \qw
& \rstick{\ket{0}} \qw\\
    & \lstick{\ket{0}} & \gate{H}
& \multigate{2}{\cQ_{\Uin}} & \ghost{G \coloneqq \uni{8}} & \qw 
& \multigate{2}{\cQ_{\Uin}}  & \ghost{G^\dag} & \qw
& \multigate{2}{\cQ_{\Uin}} & \multigate{1}{\uni{4}} & \ghost{G} & \qw
& \multigate{2}{\cQ_{\Uin}} & \ghost{G^\dag} & \qw
& \multigate{1}{\uni{4}} & \qw & \ctrl{2}
& \qw \backslash \\
    & \lstick{\ket{0}} & \gate{H} 
& \ghost{\cQ_{\Uin}} & \ghost{G \coloneqq \uni{8}} & \qw 
& \ghost{\cQ_{\Uin}}  & \ghost{G^\dag} & \qw
& \ghost{\cQ_{\Uin}} & \ghost{\uni{4}} & \ghost{G} & \qw
& \ghost{\cQ_{\Uin}} & \ghost{G^\dag} & \qw
& \ghost{\uni{4}} & \ctrl{1} & \qw
& \qw \backslash \\
    & {} & \qw 
& \ghost{\cQ_{\Uin}} & \qw & \qw
& \ghost{\cQ_{\Uin}} & \qw & \qw
& \ghost{\cQ_{\Uin}} & \qw & \qw & \qw
& \ghost{\cQ_{\Uin}} & \qw  & \qw
& \qw & \gate{Y} & \gate{X}
& \qw}}}}

\caption{Derivation of the training ansatz for the 4-call qubit-unitary inversion protocol.}~\label{fig:ansatz derivation}
\end{figure}
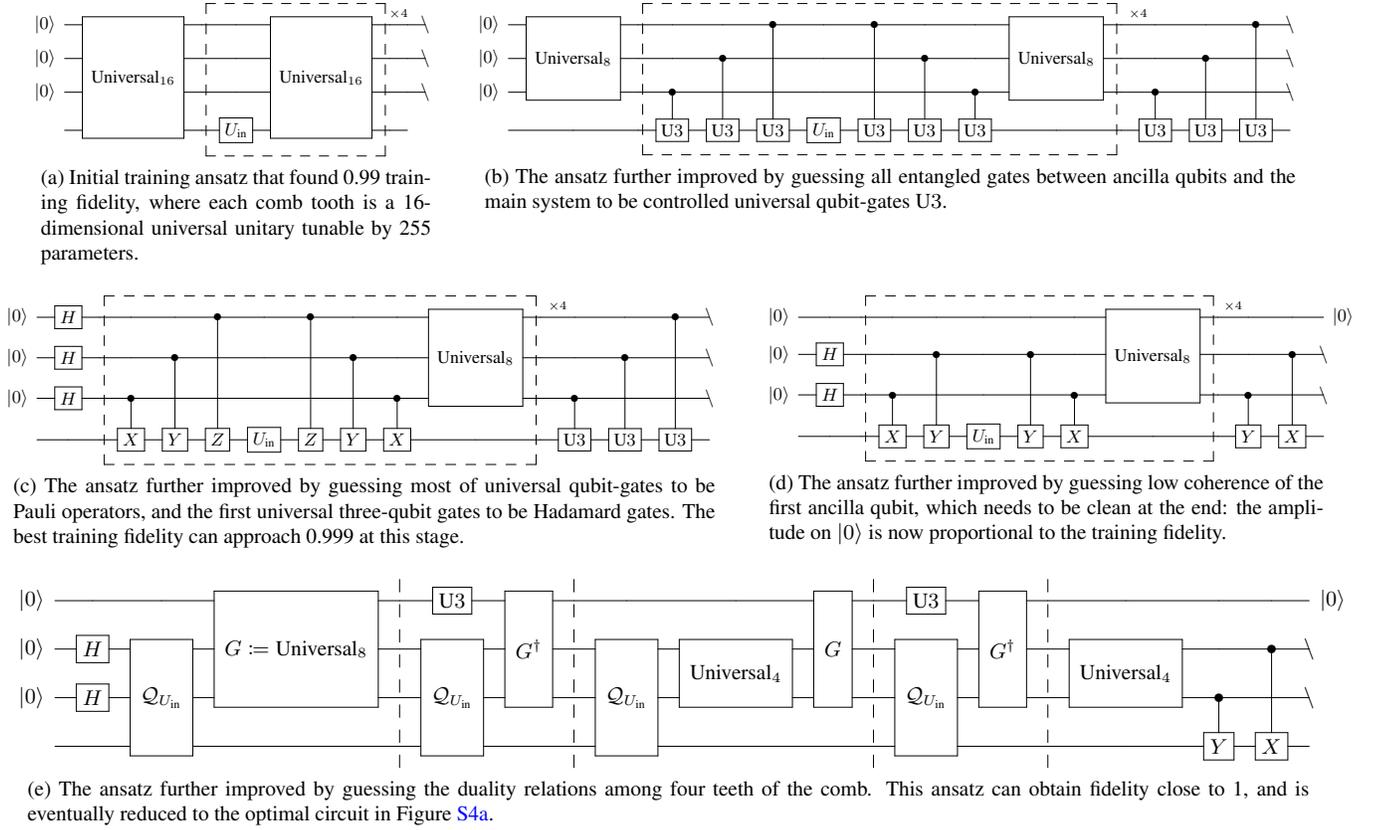

\update{

At each stage, we inspected the numerical results of the parameterized gates and got inspiration from traditional quantum algorithms. 
For example, at stage (d), we postulate that for each input $\Uin$, the surrounded universal gates $\uni{8}$ potentially include encoding and decoding operations facilitated by control Pauli operators,
enabling the ancilla qubits to produce a specific linear combination of $\Uin$ acting on the target qubit. 
This further reminded us of oblivious amplitude amplification~\cite{paetznick2014repeat}, a technique that uses the duality of oracles to amplify the amplitude of a particular state. We, therefore, conjectured that a duality relation might exist among these elements, as presented at stage (e).

\paragraph{\textbf{Protocols in noisy environments}.---} We illustrate how the IBM simulation in Figure 3 is completed. We require that our experiment should work for any input state. However, since all operation in the circuit may contain error, it is not appropriate to compute the average similarities using the comb formalism. As a compromised solution, one can add an extra qubit to the main system, and the input state of which is taken as the Bell state $\ket{\Psi} = \left(\ket{00} + \ket{11}\right) / \sqrt{2}$. The similarity for each input $\Uin$ is the state fidelity between the output state and $\Uin^{-1} \ket{\Psi}$. Then the average similarity is approximated by sampling 200 unitaries.

We additionally consider the scenario where the input unitary gates are imperfect, being affected by depolarizing noise characterized by $D_p(\rho) = p I / 2 + (1 - p) \rho$ with the noise level $p$. 
Consequently, the circuit receives $D_p \circ \cU_\textrm{in}$ as input, and the output deviates from the intended $\Uin^{-1}$. 
To evaluate the performance under this noise level, we adopted the same approach as in calculating the comb-based loss function. Based on the specific circuits for the 4-call protocol and the 5-call protocol, we first get their Choi operators. 
We then calculate the performance operator under noise level $p$, denoted as $\O_p$, by substituting the input processes with $D_p \circ \cU_\textrm{in}$,
\begin{equation}
    \O^{\textrm{DE}}_p = \integral[\textrm{Haar}]{U} \ketbravec{U^{-1}}{U^{-1}}_{\bmP, \bmF} \ox \left(\cJ_{\cU \circ D_p}\right)_{I_1, O_1}^T \ox \ldots \ox \left(\cJ_{\cU \circ D_p}\right)_{I_m, O_m}^T
,\end{equation}
where $\cU: \rho \to U \rho U^\dag$ is the operational representation of $U$ and $\cJ_{\cU \circ D_p}$ is the Choi operator of the composition between operation $\cU$ and the depolarzing noise $D_p$. In the experiment, $\O^{\textrm{DE}}_p$ is obtained by sampling $40000$ unitaries. For every quantum comb that inputs such noisy unitary, the corresponding average similarity function between its output and the inverse unitary can be evaluated by $\trace{\O^{\textrm{DE}}_p C_{\bmV}}$, where $C_{\bmV}$ is the Choi operator of this comb. 

The results, illustrated in Figure~\ref{fig:depo noise}, show a near-linear decline in average similarity for both circuits as the noise level $p$ increases. Notably, the $4$-call protocol exhibits better performance in comparison to the $5$-call protocol, maintaining an average similarity above $0.9$ for $p < 0.05$.

\begin{figure}[H]
    \centering
    \includegraphics[width=0.4\linewidth]{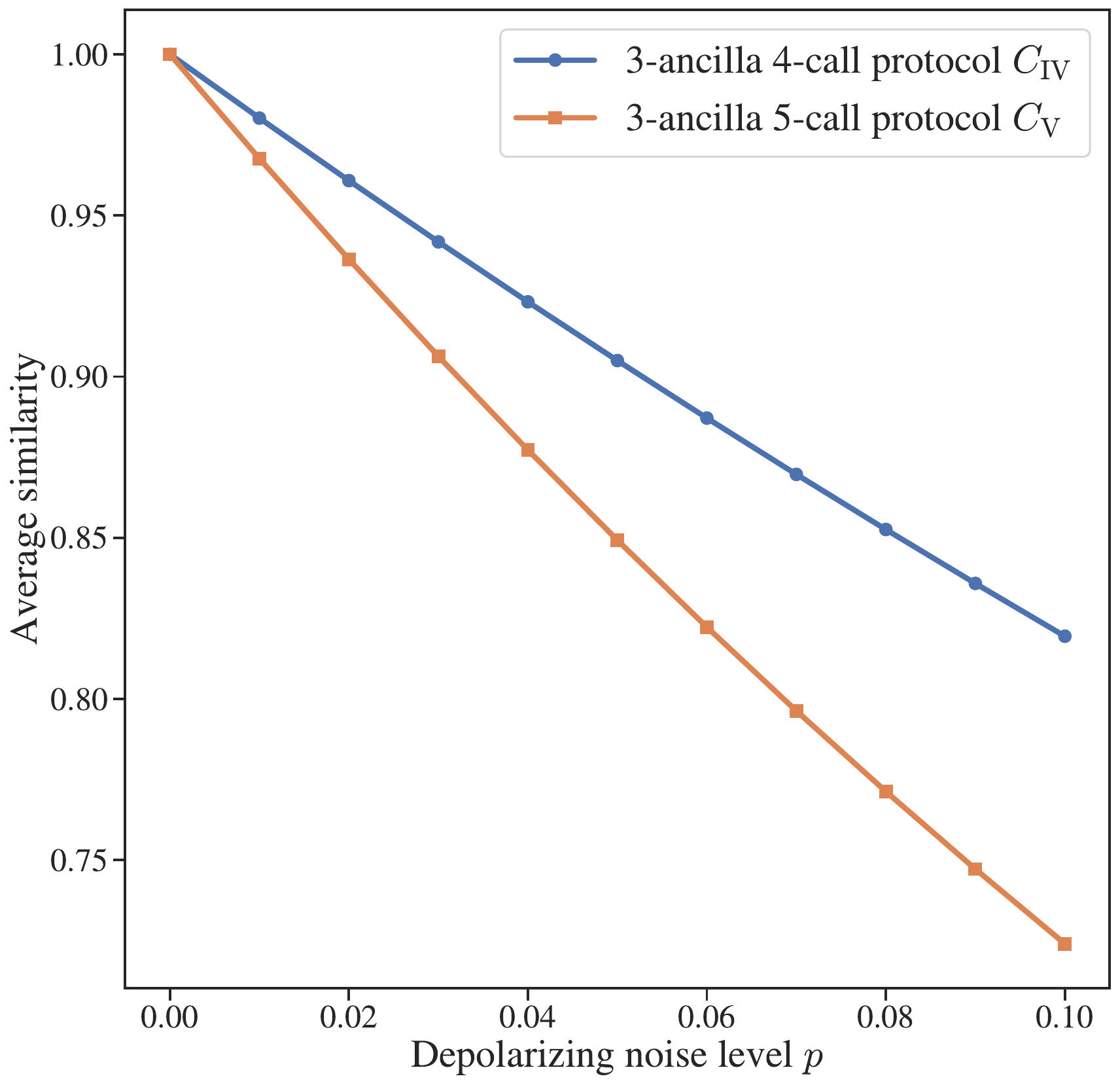}
    \caption{Simulation of our two protocols under the noise setting that input unitary is carried by a depolarizing noise with noise level $p \in [0, 0.1]$.}
    \label{fig:depo noise}
\end{figure}
}

\end{document}